\def\BibTeX{{\rm B\kern-.05em{\sc i\kern-.025em b}\kern-.08em
    T\kern-.1667em\lower.7ex\hbox{E}\kern-.125emX}}
\newtheorem{theorem}{Theorem}
\newtheorem{lemma}{Lemma}
\newtheorem{definition}{Definition}
\newtheorem{claim}{Claim}
\newtheorem{claimproof}{Proof of Claim}
\newcommand{\setd}{\ensuremath{\mathcal{D}}}
\newcommand{\bs}[1]{\boldsymbol{#1}}
\definecolor{calpolypomonagreen}{rgb}{0.12, 0.3, 0.17}
\newcounter{remarkcount}
\newcommand{\circlearrow}{}
\DeclareRobustCommand{\circlearrow}{%
  \mathrel{\vphantom{\rightarrow}\mathpalette\circle@arrow\relax}%
}
\newcommand{\circle@arrow}[2]{%
  \m@th
  \ooalign{%
    \hidewidth$#1\circ\mkern1mu$\hidewidth\cr
    $#1-$\cr}%
}
\let\emptyset\varnothing
\newcommand{\mbf}{\mathbf}
\newcommand{\mc}{\mathcal}
\newcommand{\mbb}{\mathbb}
\theoremstyle{definition}
\theoremstyle{remark}
\def\BibTeX{{\rm B\kern-.05em{\sc i\kern-.025em b}\kern-.08em
    T\kern-.1667em\lower.7ex\hbox{E}\kern-.125emX}}
\begin{document}

\title{A Lower and Upper Bound on the Epsilon-Uniform Common Randomness Capacity } 

\color{black}\author{
\IEEEauthorblockN{Rami Ezzine\IEEEauthorrefmark{1}\IEEEauthorrefmark{3}, Moritz Wiese\IEEEauthorrefmark{1}\IEEEauthorrefmark{3}, Christian Deppe\IEEEauthorrefmark{1}\IEEEauthorrefmark{3}\IEEEauthorrefmark{5} and Holger Boche\IEEEauthorrefmark{1}\IEEEauthorrefmark{2}\IEEEauthorrefmark{3}\IEEEauthorrefmark{4}\IEEEauthorrefmark{5}}
\IEEEauthorblockA{\IEEEauthorrefmark{1}Technical University of Munich, Munich, Germany\\
\IEEEauthorrefmark{2}CASA -- Cyber Security in the Age of Large-Scale Adversaries–
Exzellenzcluster, Ruhr-Universit\"at Bochum, Germany\\
\IEEEauthorrefmark{3}BMBF Research Hub 6G-life, Munich, Germany\\
\IEEEauthorrefmark{4}{\color{black}{ Munich Center for Quantum Science and Technology (MCQST) }}\\
\IEEEauthorrefmark{5} {\color{black}{Munich Quantum Valley (MQV)}} \\
Email: \{rami.ezzine, wiese, christian.deppe, boche\}@tum.de}
}\color{black}

\maketitle
\thispagestyle{plain}
\pagenumbering{arabic}
\pagestyle{plain}
\begin{abstract}
 We consider a standard two-source model for uniform common randomness (UCR) generation, in which Alice and Bob observe independent and identically distributed (i.i.d.) samples of a  correlated finite source and where Alice is allowed to send information to Bob over an arbitrary single-user channel. We study the \(\boldsymbol{\epsilon}\)-UCR capacity for the proposed model, defined as the maximum common randomness rate one can achieve such that the probability that Alice and Bob do not agree on a common uniform or nearly uniform random variable does not exceed \(\boldsymbol{\epsilon}.\) 
 We establish a lower and \color{black} an \color{black} upper bound on the \(\boldsymbol{\epsilon}\)-UCR capacity using the bounds on the \(\boldsymbol{\epsilon}\)-transmission capacity proved by Verd\'u and Han for arbitrary point-to-point channels.
\end{abstract}

\section{Introduction}
\label{introduction}

Common Randomness (CR) is a highly valuable resource for modern communication systems. It is expected that the robustness, low-latency, ultra-reliability, resilience and security requirements imposed by these communication systems will be met on the basis of CR \cite{6Gcomm}\cite{6Gpostshannon}. In the CR generation framework, the sender Alice and the receiver Bob, often described as terminals, aim to agree on a common random variable with high probability \cite{part2}.

\color{black}CR allows an enormous performance gain in Post-Shannon communication tasks such as identification and secure identification \cite{part2}\cite{Generaltheory}\cite{CRincrease}, which are key techniques for the 6G technology\cite{keytechniques}.  
The identification scheme \cite{identification} is a new approach in communications developed by Ahlswede and Dueck in 1989. \color{black}
It is more efficient than the classical transmission scheme proposed by Shannon \cite{shannon} in several applications such as machine-to-machine
and human-to-machine systems \cite{applications}, industry 4.0 \cite{industrie40} and 6G communication systems \cite{6Gcomm}, which require ultra-reliable low-latency information exchange. Further applications of the identification scheme include digital watermarking \cite{Moulin,watermarkingahlswede, watermarking}.

The resource CR is also of high relevance in cryptography since under additional secrecy constraints, the generated CR can be used as secret keys, as shown in the fundamental two papers \cite{part1}\cite{Maurer}.
 CR is also highly relevant in the modular coding scheme for secure communication, where the generated randomness can be
used as a seed \cite{semanticsecurity}.

 CR is a useful resource for coding over arbitrarily varying channels \cite{capacityAVC}\cite{bandfive}, where we require only a little amount of CR compared to the set of messages. 
 By adding CR, one can fully compensate the active jamming attacks. Therefore, resilience by design can be achieved \cite{6Gcomm}. It is in this context worth mentioning that the security and resilience requirements are crucial for achieving trustworthiness. The latter represents a major challenge for future communication systems \cite{6Gandtrustworthiness}.

Different information theoretical models for CR generation have been investigated in the literature \cite{part2}\cite{survey}. The most standard one is a two-source model with unidirectional communication introduced by Ahlswede and Csiszár in \cite{part2}. In the two-source model, Alice and Bob observe independent and identically distributed (i.i.d.) samples of a correlated finite source. In \cite{part2}, the authors considered first the case when the two terminals are allowed to communicate over perfect channels and second the case when the terminals communicate over discrete noisy channels. They derived a single-letter formula of the CR capacity for both scenarios. It was additionally shown that for the proposed models, the CR capacity and the uniform common randomness (UCR) capacity are asymptotically the same and that the CR capacity can be always attained with nearly uniform random variables.
This is, from a practical perspective, the most convenient form of CR.
A more general scenario has been investigated in \cite{generalformulaUCRcapacity}, where Alice is allowed to send information to Bob via an arbitrary single-user channel. The authors in \cite{generalformulaUCRcapacity} established a general formula for the UCR capacity by making use of  a general formula for the channel transmission capacity elaborated in \cite{verduhan}.

\color{black}
We consider the two-source model for CR generation with one-way communication over arbitrary single-user channels proposed in \cite{generalformulaUCRcapacity}, where no further assumption on
stationarity, ergodicity or any kind of information stability is imposed.  We study  the $\epsilon$-UCR capacity for the proposed model, defined as the maximum rate of UCR one can attain such that the probability that Alice and Bob do not agree on a common uniform or nearly uniform random variable does not exceed $\epsilon,$ where unlike in \cite{generalformulaUCRcapacity},  $0<\epsilon<1$ is now fixed and cannot be made arbitrarily small.
\color{black}

The main contribution of our work consists in establishing general bounds on the $\epsilon$-UCR capacity that hold for arbitrary point-to-point channels. 
In our proof of the bounds on the $\epsilon$-UCR capacity, we make use of a well-known result of \cite{verduhan}, which is a lower and upper on the $\epsilon$-transmission capacity of arbitrary channels based on the inf-information rate between the channel inputs and outputs. The bounds that we derive on the $\epsilon$-UCR capacity hold with equality except possibly at the points of discontinuity of the $\epsilon$-transmission capacity, of which there are, at most, countably many. 

 For the sake of notational simplicity, we assume throughout the paper that the channel input and output alphabets are finite. 
 
\textit{Outline:} The remainder of the paper is structured as follows. In Section \ref{sec2}, we present our system model for CR generation, review the definition of an achievable $\epsilon$-transmission rate and the $\epsilon$-transmission capacity as well as the definition of an achievable $\epsilon$-UCR rate and the $\epsilon$-UCR capacity and present our main result. In Section \ref{directpart}, we prove the lower bound on the $\epsilon$-UCR capacity. Section \ref{converseproof} is dedicated to the proof of the upper-bound on the $\epsilon$-UCR capacity, where we use a change of measure argument introduced in \cite{strongconverse}. Section \ref{conclusion} contains concluding remarks and proposes a potential future work in this field. The proofs of several auxiliary lemmas are collected in the appendix.

\textit{Notation:} Throughout the paper,  $\log$ is taken to  base 2 and $\ln$ refers to the natural logarithm. For any set $\mc E,$ $\mc E^c$ refers to its complement and $\lvert \mc E \rvert$ refers to its cardinality.  For any random variable $X$ with distribution $P_X,$ $\text{supp}(P_X)$ refers to its support. For any random variables $X$ and $Y$ with respective distribution $P_X$ and $P_Y,$ $D(P_X;P_Y)$ denotes the relative entropy from $P_Y$ to $P_X.$   
\section{System Model, Definitions and Main Result}
\label{sec2}
\subsection{System Model}
\label{systemmodel}
Let a discrete memoryless multiple source (DMMS) $P_{XY}$ with two components, with  generic variables $X$ and $Y$ on alphabets $\mathcal{X}$ and $\mathcal{Y}$, respectively, be given. The DMMS emits i.i.d. samples of $(X,Y).$
Suppose that the outputs of $X$ are observed only by Terminal $A$ and those of $Y$ only by Terminal $B.$ Assume also that the joint distribution of $(X,Y)$ is known to both terminals.
Terminal $A$
can communicate with Terminal $B$ over an arbitrary single-user channel $\mbf W=\{W_n: \mc T^n \rightarrow \mc Z^n\}_{n=1}^{\infty},$ defined as an arbitrary sequence of $n$-dimensional conditional distributions $W_n$ from $\mc T^n$ to $\mc Z^n$, where $\mc T$ and $\mc Z$ are the  input and output alphabets, respectively.
There are no other resources available to any of the terminals. 
\begin{definition}
A CR-generation protocol of block-length $n$ consists of:
\begin{enumerate}
    \item A function $\Phi$ that maps $X^n$ into a random variable $K$ with alphabet $\mathcal{K}$ satisfying $\lvert \mc K \rvert \geq 3$ generated by Terminal $A.$
    \item A function $\Lambda$ that maps $X^n$ into the channel input sequence $T^n=(T_1,\hdots,T_n)\in \mc T^n.$
    \item A function $\Psi$ that maps $Y^n$ and the channel output sequence $Z^n=(Z_1,\hdots, Z_n)\in \mc Z^n$ into a random variable $L$ with alphabet $\mathcal{K}$ generated by Terminal $B.$
\end{enumerate}
Such a protocol induces a pair of random variables $(K,L)$ whose joint distribution is determined by $P_{XY}$ and by the channel $\mathbf{W}$. Such a pair of random variables $(K,L)$ is called permissible.
This is illustrated in Fig. \ref{CRprotocol}.
\end{definition}
\begin{figure}
\centering
\tikzstyle{block} = [draw, rectangle, rounded corners,
minimum height=2em, minimum width=2cm]
\tikzstyle{blockchannel} = [draw, top color=white, bottom color=white!80!gray, rectangle, rounded corners,
minimum height=1cm, minimum width=.3cm]
\tikzstyle{input} = [coordinate]
\usetikzlibrary{arrows}
\begin{tikzpicture}[scale= 1,font=\footnotesize]
\node[blockchannel] (source) {$P_{XY}$};
\node[blockchannel, below=2.6cm of source](channel) { Channel $\bs W$};
\node[block, below left=2.2cm of source] (x) {Terminal $A$};
\node[block, below right=2cm of source] (y) {Terminal $B$};
\node[above=1cm of x] (k) {$K=\Phi(X^n)$};
\node[above=1cm of y] (l) {$L=\Psi(Y^n,Z^n)$};

\draw[->,thick] (source) -- node[above] {$X^n$} (x);
\draw[->, thick] (source) -- node[above] {$Y^n$} (y);
\draw [->, thick] (x) |- node[below right] {$T^n=\Lambda(X^n)$} (channel);
\draw[<-, thick] (y) |- node[below left] {$Z^n$} (channel);
\draw[->] (x) -- (k);
\draw[->] (y) -- (l);

\end{tikzpicture}
\caption{Two-source model for CR generation with one-way communication over an arbitrary single-user channel $W.$}
\label{CRprotocol}
\end{figure}
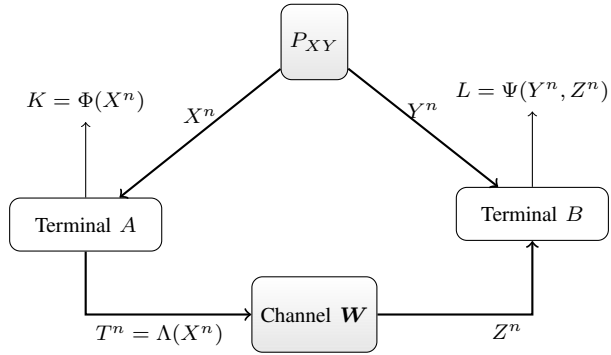
\subsection{Achievable Rate and Capacity}
We define first an achievable $\epsilon$-UCR rate and the $\epsilon$-UCR capacity.
\begin{definition} 
\label{ucrrate}
 Let $0<\epsilon<1.$ A number $H$ is called an achievable $\epsilon$-UCR rate  if there exists a non-negative constant $c$ such that for every $\beta>0,$ $\delta>0$  and for sufficiently large $n$ there exists a permissible  pair of random variables $(K,L)$ such that
\begin{equation}
   \mbb P\left[K\neq L\right]\leq \epsilon, 
    \label{errorcorrelated}
\end{equation}
\begin{equation}
    |\mathcal{K}|\leq 2^{cn},
    \label{cardinalitycorrelated}
\end{equation}
\begin{equation}
\bigg| \frac{1}{n}H(K)-\frac{1}{n}\log \lvert \mc K \rvert \bigg|  \leq \beta,
  \label{uniformity}
\end{equation}
\begin{equation}
    \frac{1}{n}H(K)> H-\delta.
     \label{ratecorrelated}
\end{equation}
\end{definition}
\begin{definition} 
The $\epsilon$-UCR capacity $C_{\epsilon,UCR}(P_{XY},\mathbf{W})$ is the maximum achievable $\epsilon$-UCR rate.
\end{definition}
Next, we define an achievable $\epsilon$-transmission rate and the $\epsilon$-transmission capacity of the channel $\mathbf{W}.$ For this purpose, we begin by providing the definition of a transmission-code for the channel $\mathbf{W}.$
\begin{definition}
\label{defcode}
A transmission-code $\Gamma_n$ of block-length $n$ and size $N_{n}$ for the channel $\mbf W$ is a family of pairs of codewords and decoding regions $\left\{(\bs{t}_\ell,\setd_\ell) \in \mc T^n \times \mc Z^n, \ \ell=1,\ldots,N_{n} \right\}$ such that for all $\ell,j \in \{1,\ldots,N_{n}\}$  
\begin{align}
&\setd_\ell \cap \setd_j = \emptyset,\quad \ell \neq j. \nonumber
\end{align}The maximum error probability is expressed as 
\begin{align}
    e(\Gamma_n)=\underset{\ell \in \{1,\ldots,N_{n}\}}{\max}W_{n}({\setd_\ell^{c}}|\bs{t}_\ell). \nonumber
\end{align}
\end{definition}
\begin{definition}
\label{deftransmissionrate}
  Let $0<\epsilon<1.$ A real number $R$ is called an \textit{achievable} $\epsilon$-transmission rate of the channel $\mathbf{W}$ if for every $\delta>0$ there exists a code sequence $(\Gamma_n)_{n=1}^\infty$, where each code $\Gamma_n$ of block-length $n$ and size $N_n$ is defined according to Definition \ref{defcode},  such that for sufficiently large $n$ 
    \[
        \frac{\log N_{n}}{n}\geq R-\delta
    \]
    and
    \begin{align}
        e(\Gamma_n)\leq\epsilon.
        \nonumber
    \end{align}
\end{definition}
\begin{definition}
The $\epsilon$-transmission capacity of the channel $\mathbf{W}$ is the maximum achievable $\epsilon$-transmission rate for $\mathbf{W}$ and it is denoted by $C_{\epsilon}(\mathbf{W}).$
\end{definition}
A lower and an upper-bound on the $\epsilon$-transmission capacity  were established in \cite{verduhan}.
\begin{theorem}
\label{generalcapacityformula} \cite{verduhan}
Let $0<\epsilon<1.$ The $\epsilon$-transmission capacity $C_{\epsilon}(\mathbf{W})$ satisfies the following:
\begin{align}
C_{\epsilon}(\mathbf{W}) \geq \underset{\bs{T}}{\sup}\sup\{ R: F_{\bs{T}}(R)< \epsilon \} \label{lowerboundtransmissioncapacity}
\end{align}
and
\begin{align}
C_{\epsilon}(\mathbf{W}) \leq \underset{\bs{T}}{\sup}\sup\{ R: F_{\bs{T}}(R) \leq  \epsilon \} \label{upperboundtransmissioncapacity}
\end{align}
where 
\begin{align}
F_{\bs{T}}(R)= \underset{n\rightarrow\infty}{\limsup} \ \mbb P \left[\frac{1}{n}i(T^n;Z^n)\leq R\right], \nonumber
\end{align}
with $\bs{T}$ being an input process in the form of a sequence of finite-dimensional distributions $\bs{T}=\{T^n=(T_1,\hdots,T_n)\}_{n=1}^{\infty}$
and with $\bs{Z}=\{Z^n=(Z_1,\hdots,Z_n)\}_{n=1}^{\infty}$ being the corresponding output sequence of finite-dimensional distributions induced by $\bs{T}$ via the channel $\mbf W,$ 
where for any $(t^n,z^n)\in \mc T^n \times \mc Z^n$
\begin{align}
    i(t^n;z^n)=\log\frac{P_{Z^n,T^n}(z^n,t^n)}{P_{Z^n}(z^n)P_{T^n}(t^n)}. \nonumber
\end{align}
The lower and upper bound in \eqref{lowerboundtransmissioncapacity} and \eqref{upperboundtransmissioncapacity} are equal except possibly at the points of discontinuity of $C_\epsilon(\mbf W),$ of which there are, at most, countably many. 
\end{theorem}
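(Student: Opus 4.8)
The plan is to prove the lower bound \eqref{lowerboundtransmissioncapacity} and the upper bound \eqref{upperboundtransmissioncapacity} separately by information-spectrum (threshold) methods, and then to settle the claim that the two coincide off a countable set by a monotonicity argument in $\epsilon$. The single object driving everything is the normalized information density $\tfrac1n i(T^n;Z^n)$, whose limiting tail is exactly $F_{\bs T}(R)$; both bounds amount to converting statements about this tail into statements about the error probability of codes.

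For the lower bound I would fix an arbitrary input process $\bs T$ and an arbitrary $R$ with $F_{\bs T}(R) < \epsilon$, and show that $R$ is an achievable $\epsilon$-transmission rate. The tool is Feinstein's lemma: for each $n$, each integer $M$ and each $\gamma > 0$ there is a code $\Gamma_n$ of size $M$ for $W_n$ with
\[ e(\Gamma_n) \le \mbb P\!\left[\tfrac1n i(T^n;Z^n) \le \tfrac1n \log M + \gamma\right] + 2^{-n\gamma}. \]
Taking $\tfrac1n \log M_n \to R - \delta$ and a fixed small $\gamma$ with $R - \delta + \gamma < R$, and using that $F_{\bs T}$ is non-decreasing, the $\limsup$ of the right-hand side is at most $F_{\bs T}(R - \delta + \gamma) \le F_{\bs T}(R) < \epsilon$, so $e(\Gamma_n) \le \epsilon$ for all large $n$. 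Hence such $R$ are achievable, and the supremum over $R$ and over $\bs T$ gives \eqref{lowerboundtransmissioncapacity}.

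For the upper bound, suppose $R$ is an achievable $\epsilon$-rate, witnessed by codes $\Gamma_n$ of size $N_n$ with $\tfrac1n \log N_n \ge R - \delta$ and $e(\Gamma_n) \le \epsilon$. Let $\bs T^\star = \{T^n\}$ place the uniform distribution on the codewords of $\Gamma_n$, and let $\bs Z^\star$ be the induced output process. The Verd\'u--Han converse lemma gives, for every $\gamma > 0$,
\[ \epsilon \ge \mbb P\!\left[\tfrac1n i(T^n;Z^n) \le \tfrac1n \log N_n - \gamma\right] - 2^{-n\gamma}, \]
since the average error probability is at most $e(\Gamma_n) \le \epsilon$. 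Substituting $\tfrac1n \log N_n \ge R - \delta$ and taking $\limsup$ yields $F_{\bs T^\star}(R - \delta - \gamma) \le \epsilon$, so $R - \delta - \gamma \le \sup\{R' : F_{\bs T^\star}(R') \le \epsilon\}$; letting $\delta, \gamma \to 0$ and enlarging the supremum over all input processes gives \eqref{upperboundtransmissioncapacity}.

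It remains to reconcile the two bounds, and here lies the conceptual subtlety. Writing $C^-(\epsilon)$ and $C^+(\epsilon)$ for the right-hand sides of \eqref{lowerboundtransmissioncapacity} and \eqref{upperboundtransmissioncapacity}, both are non-decreasing in $\epsilon$ and satisfy $C^+(\epsilon') \le C^-(\epsilon) \le C^+(\epsilon)$ for $\epsilon' < \epsilon$, because $F_{\bs T}(R) \le \epsilon'$ forces $F_{\bs T}(R) < \epsilon$. Thus $C^-$ and $C^+$ share the same left limits and therefore agree at every continuity point of the bounded monotone function $C^+$, which has at most countably many discontinuities; since $C_\epsilon(\mbf W)$ is squeezed between them, equality holds off this countable set. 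I expect the real work to be in the two spectrum lemmas — Feinstein's achievability and the Verd\'u--Han converse, whose proofs rest on random coding with threshold decoding and a change-of-measure estimate, respectively — while the monotonicity argument, though it carries the delicate part of the statement, is short once the $\epsilon$-dependence is made explicit.
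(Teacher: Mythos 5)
Your proposal is correct, and it is essentially the paper's own route: the paper does not prove Theorem \ref{generalcapacityformula} at all but imports it from Verd\'u--Han, and your argument is precisely the proof given there — Feinstein's lemma for achievability, the Verd\'u--Han information-spectrum converse applied to the input process that is uniform on the codewords, and monotonicity in $\epsilon$ for the countable-exception claim. The only cosmetic difference is that you place the exceptional set at the discontinuities of the upper bound $C^{+}(\epsilon)$ rather than of $C_{\epsilon}(\mathbf{W})$ as in the statement; this is harmless, since the relations $C^{+}(\epsilon') \le C^{-}(\epsilon) \le C^{+}(\epsilon)$ for $\epsilon' < \epsilon$ together with $C^{-} \le C_{\epsilon}(\mathbf{W}) \le C^{+}$ force all three monotone functions to share their one-sided limits, hence the same (at most countable) set of discontinuity points.
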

\subsection{Main Result}
\label{mainresult}
In this section, we give an upper and lower bound on the $\epsilon$-UCR capacity for the model presented in Section \ref{systemmodel}.
\begin{theorem}
For the model in Fig \ref{CRprotocol}, the $\epsilon$-UCR capacity $C_{\epsilon,UCR}(P_{XY},\mathbf{W})$ satisfies
\begin{align}
C_{\epsilon,UCR}(P_{XY},\mathbf{W})\geq \underset{ \substack{U \\{\substack{U\circlearrow{X} \circlearrow{Y}\\ I(U;X)-I(U;Y) \leq l(\epsilon)}}}}{\max} I(U;X)
\label{lowerboundepsilonCRcapacity}
\end{align}
and
\begin{align}
C_{\epsilon,UCR}(P_{XY},\mathbf{W})\leq \underset{ \substack{U \\{\substack{U \circlearrow{X} \circlearrow{Y}\\ I(U;X)-I(U;Y) \leq u(\epsilon)}}}}{\max} I(U;X),
\label{upperboundepsilonCRcapacity}
\end{align}
where
\begin{align}
l(\epsilon)=\underset{\bs{T}}{\sup}\sup\{ R: F_{\bs{T}}(R)< \epsilon \}
\label{Retasupell}
\end{align}
and
\begin{align}
u(\epsilon)= \underset{\bs{T}}{\sup}\sup\{ R: F_{\bs{T}}(R) \leq  \epsilon \}.
\label{uepsilon}
\end{align}
with $\bs{T}$ being an input process in the form of a sequence of finite-dimensional distributions $\bs{T}=\{T^n=(T_1,\hdots,T_n)\}_{n=1}^{\infty}$
and with $\bs{Z}=\{Z^n=(Z_1,\hdots,Z_n)\}_{n=1}^{\infty}$ being the corresponding output sequence of finite-dimensional distributions induced by $\bs{T}$ via the channel $W.$ 
\label{main theorem}
The lower and upper bound in \eqref{lowerboundepsilonCRcapacity} and \eqref{upperboundepsilonCRcapacity} hold with equality except at the points where $l(\epsilon)$ and $u(\epsilon)$ do not coincide, of which there are, at most, countably many.
\end{theorem}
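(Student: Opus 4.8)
The plan is to establish the two bounds separately, in each case reducing the common-randomness task to an $\epsilon$-transmission task so that Theorem~\ref{generalcapacityformula} supplies the thresholds $l(\epsilon)$ and $u(\epsilon)$. For the lower bound I would fix an auxiliary $U$ attaining the maximum in \eqref{lowerboundepsilonCRcapacity}, so that $U \circlearrow X \circlearrow Y$ and $I(U;X)-I(U;Y)\leq l(\epsilon)$, and build a two-layer Ahlswede--Csisz\'ar/Wyner--Ziv code. First generate a codebook of roughly $2^{nI(U;X)}$ sequences $U^n$ that covers $X^n$, so Alice can pick, with vanishing failure probability, an index jointly typical with her observation; declaring $K$ to be that index gives $\tfrac1n H(K)\approx I(U;X)$ and the near-uniformity \eqref{uniformity}. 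Then bin the codebook into roughly $2^{n(I(U;X)-I(U;Y))}$ bins and have Alice send the bin index over $\mbf{W}$. Since the bin rate is at most $l(\epsilon)$, the Verd\'u--Han lower bound \eqref{lowerboundtransmissioncapacity} furnishes a transmission code with maximal error at most $\epsilon$; Bob decodes the bin index and, using $Y^n$ and the Slepian--Wolf binning condition, recovers the unique codeword jointly typical with $Y^n$, setting $L$ to its index. A union bound over the vanishing covering and binning errors and the channel error gives $\mbb{P}[K\neq L]\leq\epsilon$ after absorbing the $o(1)$ slack into $\delta$ and $\beta$, which proves \eqref{lowerboundepsilonCRcapacity}.

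For the converse, take any permissible $(K,L)$ satisfying \eqref{errorcorrelated}--\eqref{ratecorrelated}. Using that $K$ is a function of $X^n$, that $L$ is a function of $(Y^n,Z^n)$, and that the channel output satisfies $Z^n \perp (X^n,Y^n,K)\mid T^n$, I would write
\begin{equation}
H(K) \leq I(K;Y^n) + I(K;Z^n\mid Y^n) + H(K\mid L),
\end{equation}
and bound the middle term by $I(T^n;Z^n)$ via the Markov chain $Y^n \circlearrow T^n \circlearrow Z^n$ (which yields $I(K;Z^n\mid Y^n)\leq I(T^n;Z^n\mid Y^n)\leq I(T^n;Z^n)$). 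The first two terms carry the source and channel content, while $H(K\mid L)$ is the residual error term.

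The main obstacle is that, with $\epsilon$ fixed rather than vanishing, $H(K\mid L)$ is not negligible: Fano's inequality only gives $H(K\mid L)\leq 1+\epsilon\log|\mc{K}|\leq 1+\epsilon c n$, which survives division by $n$. This non-vanishing error is precisely what forces the threshold $u(\epsilon)$ in place of the ordinary transmission capacity, and handling it is the crux of the converse. Here I would invoke the change-of-measure argument of \cite{strongconverse}: restrict attention to a high-probability subset on which the normalized information density $\tfrac1n i(T^n;Z^n)$ and the relevant source densities are controlled, pass to a tilted conditional distribution supported on this set, and use it to replace the failing Fano estimate so that the constraint $\mbb{P}[K\neq L]\leq\epsilon$ resurfaces as a bound on the inf-information rate of the code. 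This is exactly the mechanism underlying the Verd\'u--Han converse \eqref{upperboundtransmissioncapacity}, and it yields $\tfrac1n I(T^n;Z^n)\leq u(\epsilon)+o(1)$.

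Finally I would single-letterize. Introducing suitable auxiliaries built from $K$ together with past/future channel outputs and side information, and applying the Csisz\'ar sum identity, converts $\tfrac1n I(K;Y^n)$ and the channel bound into averaged single-letter quantities $I(U;X)$ and $I(U;X)-I(U;Y)$ relative to a time-sharing variable, with $U \circlearrow X \circlearrow Y$ guaranteed by the memorylessness of $(X_i,Y_i)$. The channel bound then forces $I(U;X)-I(U;Y)\leq u(\epsilon)$, so $\tfrac1n H(K)$ is at most the maximum in \eqref{upperboundepsilonCRcapacity} up to $o(1)$, establishing the upper bound. The concluding claim is immediate: by Theorem~\ref{generalcapacityformula}, $l(\epsilon)=u(\epsilon)$ except at the at most countably many discontinuities of $C_\epsilon(\mbf{W})$, and wherever the two thresholds agree the two maximizations in \eqref{lowerboundepsilonCRcapacity} and \eqref{upperboundepsilonCRcapacity} coincide, so the bounds hold with equality there.
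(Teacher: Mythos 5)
Your overall architecture matches the paper's (covering-plus-binning achievability, change of measure \`a la \cite{strongconverse} for the converse, single-letterization via the Csisz\'ar sum identity), but two of your steps have genuine gaps. In the lower bound, your error accounting does not close. At rates below $l(\epsilon)$, the bound \eqref{lowerboundtransmissioncapacity} only guarantees channel codes with maximal error \emph{at most} $\epsilon$, not bounded away from $\epsilon$; adding the (positive, merely vanishing) covering and binning failure probabilities on top gives $\mbb P[K\neq L]\leq \epsilon+o(1)$, which exceeds $\epsilon$ for every finite $n$. This slack cannot be ``absorbed into $\delta$ and $\beta$'': those parameters only relax the rate constraint \eqref{ratecorrelated} and the uniformity constraint \eqref{uniformity}, whereas \eqref{errorcorrelated} demands $\mbb P[K\neq L]\leq\epsilon$ with $\epsilon$ fixed. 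The paper closes exactly this gap by running the channel code at a strictly smaller error level $\epsilon'<\epsilon$ (at rates below $l(\epsilon')$), reserving the margin $\epsilon-\epsilon'$ for the source-coding errors, and then recovering $l(\epsilon)$ in the final statement through Lemma \ref{nondecreasingfunctionleftcontinuous}, which proves that $\epsilon\mapsto f_{\max}(\epsilon)$ is monotone non-decreasing and left-continuous. That lemma (or, alternatively, opening up the Feinstein-type bound inside \cite{verduhan} to get error strictly below $\epsilon$ with a margin) is a necessary ingredient your proposal omits.

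In the converse, your pivotal intermediate claim --- that the change of measure yields $\tfrac1n I(T^n;Z^n)\leq u(\epsilon)+o(1)$ under the \emph{original} measure --- is false for general channels: $u(\epsilon)$ is a quantile-type quantity of the information spectrum, and the mean can strictly exceed it. For a mixed channel that is a perfect bit pipe with probability $1/2$ and useless with probability $1/2$, one has $u(\epsilon)\approx 0$ for $\epsilon<1/2$ while $\tfrac1n I(T^n;Z^n)\to 1/2$ under a uniform input, so your chain $H(K)\leq I(K;Y^n)+I(T^n;Z^n)+H(K|L)$ cannot deliver the constraint $I(U;X)-I(U;Y)\leq u(\epsilon)$. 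What the change of measure actually provides (Lemmas \ref{upperboundlogcardinality} and \ref{mutinftilektildezcondtildey} in the paper) is a bound on $H(\tilde K|\tilde Y^n)=I(\tilde K;\tilde Z^n|\tilde Y^n)$ computed under the \emph{tilted} law, which is supported on the set $\mc S_1$ where the information density is at most $R_{\epsilon,\sup}+\mu$; and the condition $K=L$ enters not by repairing Fano but because $\tilde K=\tilde L$ almost surely under the tilted law, so that $H(\tilde K|\tilde Y^n,\tilde Z^n)=0$ exactly. Moreover, your sketch never uses the near-uniformity \eqref{uniformity}, which is indispensable here: it gives the variance bound of Lemma \ref{upperboundvariance}, hence via Chebyshev (Lemmas \ref{boundprobsett} and \ref{boundprobsetd}) the estimate $\mbb P[(X^n,Y^n,Z^n)\in\mc S]\geq\beta>0$ of Lemma \ref{probS}; without this the tilted measure need not be well-defined, and the penalty terms $n\gamma(\epsilon,\beta)+\log\tfrac1\beta$ relating $H(K|Y^n)$ to $H(\tilde K|\tilde Y^n)$ are not controlled. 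As written, your converse sketch skips the step that makes the fixed-$\epsilon$ argument work.
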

\section{Proof of the Lower-bound in Theorem \ref{main theorem} }
\label{directpart}
	We introduce and prove first the following lemma:
	\begin{lemma}
		The function 
	\begin{align*}
	f_{\max}: \epsilon \mapsto \underset{ \substack{U \\{\substack{U\circlearrow{X} \circlearrow{Y}\\ I(U;X)-I(U;Y) \leq l(\epsilon)}}}}{\max} I(U;X),
	\end{align*}
	 where $l(\epsilon)$ is defined in \eqref{Retasupell}, is left-continuous and monotone non-decreasing in $(0,1).$
	\label{nondecreasingfunctionleftcontinuous}
	\end{lemma}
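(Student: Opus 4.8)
The plan is to factor $f_{\max}$ through the real level $c=l(\epsilon)$ and treat the two dependencies separately. Define, for $c\ge 0$,
\[
\varphi(c)=\max_{\substack{U:\, U\circlearrow X\circlearrow Y \\ I(U;X)-I(U;Y)\le c}} I(U;X),
\]
so that $f_{\max}(\epsilon)=\varphi(l(\epsilon))$. First I would record that $l$ is non-decreasing: for $\epsilon_1\le\epsilon_2$ the set $\{R:F_{\boldsymbol T}(R)<\epsilon\}$ can only grow with $\epsilon$, so each inner supremum, and hence their supremum over $\boldsymbol T$, is non-decreasing. Monotonicity of $f_{\max}$ is then immediate, since enlarging $c$ relaxes the constraint $I(U;X)-I(U;Y)\le c$ and can only enlarge the set over which $I(U;X)$ is maximized.

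For left-continuity I would prove it separately for $l$ and for $\varphi$ and then compose. Since $F_{\boldsymbol T}(\cdot)$ is non-decreasing in $R$, the quantity $g_{\boldsymbol T}(\epsilon)=\sup\{R:F_{\boldsymbol T}(R)<\epsilon\}$ is a generalized inverse and is left-continuous: given $R<g_{\boldsymbol T}(\epsilon)$ there is $\tilde R>R$ with $F_{\boldsymbol T}(\tilde R)<\epsilon$, so for any $\epsilon'$ with $F_{\boldsymbol T}(\tilde R)<\epsilon'<\epsilon$ one gets $g_{\boldsymbol T}(\epsilon')\ge\tilde R>R$. The key observation is that for a non-decreasing function left-continuity is the identity $h(\epsilon)=\sup_{\epsilon'<\epsilon}h(\epsilon')$, and a supremum of such expressions may be interchanged:
\[
l(\epsilon)=\sup_{\boldsymbol T}\,\sup_{\epsilon'<\epsilon}g_{\boldsymbol T}(\epsilon')=\sup_{\epsilon'<\epsilon}\,\sup_{\boldsymbol T}g_{\boldsymbol T}(\epsilon')=\sup_{\epsilon'<\epsilon}l(\epsilon'),
\]
so $l$ inherits left-continuity even though it is an arbitrary supremum.

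The main work is the continuity of $\varphi$. I would show that the set $\mathcal R$ of achievable pairs $(I(U;X),\,I(U;X)-I(U;Y))$ over all $U$ with $U\circlearrow X\circlearrow Y$ is convex and compact. Convexity follows from a standard time-sharing construction: introduce $Q$ independent of $(X,Y)$ and set $\tilde U=(U_Q,Q)$, which preserves the Markov chain and makes both coordinates affine in the mixing weight. Compactness follows from a cardinality bound $\lvert\mathcal U\rvert\le\lvert\mathcal X\rvert+2$ via the support lemma, after which $\mathcal R$ is the continuous image of a compact set of conditional distributions $P_{U\mid X}$. For convex compact $\mathcal R$, the value $\varphi(c)=\max\{a:(a,b)\in\mathcal R,\ b\le c\}$ is concave in $c$ on $[0,\infty)$ (take maximizers at $c_1,c_2$ and mix), hence continuous on the interior $(0,\infty)$, and in particular left-continuous there. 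I expect this compactness/convexity step to be the main obstacle, as everything else reduces to monotone-function bookkeeping.

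Finally I would compose. For $\epsilon_n\uparrow\epsilon$, left-continuity of $l$ gives $l(\epsilon_n)\uparrow l(\epsilon)=:c$ with $l(\epsilon_n)\le c$, where $c\ge 0$ since the trivial input already yields $l(\epsilon)\ge 0$. If $c>0$, continuity of $\varphi$ at the interior point $c$ gives $\varphi(l(\epsilon_n))\to\varphi(c)$; if $c=0$, then $0\le l(\epsilon_n)\le c=0$ forces $l(\epsilon_n)=0$ for all $n$, so $\varphi(l(\epsilon_n))=\varphi(0)$ trivially, side-stepping the possible downward jump of the concave $\varphi$ at the endpoint $0$. In either case $f_{\max}(\epsilon_n)\to f_{\max}(\epsilon)$, which together with the monotonicity established above proves that $f_{\max}$ is left-continuous and non-decreasing on $(0,1)$.
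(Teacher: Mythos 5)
Your proposal is correct, and it actually does more than the paper's own proof. Both arguments share the same outer skeleton: monotonicity of $l$ via inclusion of the sets $\{R: F_{\bs{T}}(R)<\epsilon\}$, and left-continuity of $l$ via an interchange of suprema (the paper phrases this through unions of level sets, you through the generalized-inverse identity $l(\epsilon)=\sup_{\epsilon'<\epsilon}l(\epsilon')$; these are the same computation). The genuine difference is the last step. The paper asserts that left-continuity of $l$ ``suffices'' for left-continuity of $f_{\max}=\varphi\circ l$, where $\varphi(c)$ is the maximum of $I(U;X)$ over $U\circlearrow{X}\circlearrow{Y}$ with $I(U;X)-I(U;Y)\le c$, but it never establishes any continuity property of the inner maximization $\varphi$ itself. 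That implication is not automatic: for a non-convex constraint region the value function of a constrained maximum can fail to be left-continuous (e.g.\ if the achievable pairs were only $(0,0)$ and $(1,1)$, then $\varphi(c)=0$ for $c<1$ while $\varphi(1)=1$), so composing with a left-continuous $l$ would not help. Your proposal supplies exactly this missing ingredient: convexity of the region of pairs $\left(I(U;X),\,I(U;X)-I(U;Y)\right)$ by time-sharing, compactness via a cardinality bound on $\mc U$ (which, as a side benefit, justifies that the $\max$ in the lemma is attained at all), hence concavity and interior continuity of $\varphi$, together with the separate treatment of the boundary case $l(\epsilon)=0$, where $l(\epsilon_n)\ge 0$ forces $l(\epsilon_n)=0$. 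In short, the paper's route buys brevity at the price of an unjustified transfer of left-continuity through the maximization, while your route is longer but closes that step rigorously; the concavity/compactness lemma for $\varphi$ is the part worth keeping.
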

		\begin{proof}
	To prove that $f_{\max}$ is monotone non-decreasing in $(0,1),$ we will show first that the function $l: \epsilon \mapsto \underset{\bs{T}}{\sup}\sup\{ R: F_{\bs{T}}(R)< \epsilon \}$ is non-decreasing in $(0,1),$ where
		\begin{align}
		F_{\bs{T}}(R)= \underset{n\rightarrow\infty}{\limsup} \ \mbb P \left[\frac{1}{n}i(T^n;Z^n)\leq R\right]. \nonumber
		\end{align}
	
		Let $0<\epsilon_{1}<\epsilon_{2}<1.$
		Clearly, for any $\bs{T},$ we have
		\begin{align}
		\{ R: F_{\bs{T}}(R)< \epsilon_{1} \}\subseteq \{ R: F_{\bs{T}}(R)< \epsilon_{2} \}, \nonumber
		\end{align}
		which implies that for any $\bs{T}$
		\begin{align}
		\sup\{ R: F_{\bs{T}}(R)< \epsilon_{1} \} &\leq \sup	\{ R: F_{\bs{T}}(R)< \epsilon_{2} \}. \nonumber \\
		&\leq \underset{\bs{T}}{\sup} \sup	\{ R: F_{\bs{T}}(R)< \epsilon_{2} \}.
		\nonumber \end{align}
		This yields
		\begin{align}
		\underset{\bs{T}}{\sup}\sup\{ R: F_{\bs{T}}(R)< \epsilon_{1} \} \leq \underset{\bs{T}}{\sup} \sup	\{ R: F_{\bs{T}}(R)< \epsilon_{2} \}.
		\nonumber \end{align}
		This shows that the function $l$ is non-decreasing in $(0,1).$
			Now, let $0<\epsilon_{1}<\epsilon_{2}<1.$ It follows that
	$l(\epsilon_{1})\leq l(\epsilon_{2}).$ 
	Therefore, we have
	\begin{align}
	\underset{ \substack{U \\{\substack{U\circlearrow{X} \circlearrow{Y}\\ I(U;X)-I(U;Y) \leq l(\epsilon_1)}}}}{\max} I(U;X) \leq \underset{ \substack{U \\{\substack{U\circlearrow{X} \circlearrow{Y}\\ I(U;X)-I(U;Y) \leq l(\epsilon_2)}}}}{\max} I(U;X). \nonumber
	\end{align}
	This shows that $f_{\max}$ is monotone non-decreasing in $(0,1)$.
Now to prove the left-continuity of $f_{\max}$ in $(0,1),$ it suffices to show that the function $l: \epsilon \mapsto \underset{\bs{T}}{\sup}\sup\{ R: F_{\bs{T}}(R)< \epsilon \}$ is left-continuous in $(0,1).$
Select any $\epsilon \in (0,1)$ and a strictly increasing sequence $(\epsilon_n)_{n=1}^{\infty}$ converging to $\epsilon$ from the left. For any $\bs{T},$ we have
\begin{align}
   \bigcup\limits_{n=1}^{\infty} \bigg\{ R: F_{\bs{T}}(R)<\epsilon_n  \bigg\} = \bigg\{ R: F_{\bs{T}}(R)<\epsilon  \bigg\}. \nonumber 
\end{align}
It follows that
\begin{align}
    \underset{\epsilon_{n}\rightarrow\epsilon}{\lim} \sup \bigg\{ R: F_{\bs{T}}(R)<\epsilon_n \bigg\} = \sup \bigg\{ R: F_{\bs{T}}(R)<\epsilon  \bigg\}. 
\nonumber \end{align}
This yields
\begin{align}
      &\underset{\bs{T}}{\sup} \underset{\epsilon_{n}\rightarrow\epsilon}{\lim} \sup \bigg\{ R: F_{\bs{T}}(R)<\epsilon_n \bigg\}\nonumber \\
      &= \underset{\bs{T}}{\sup}\sup \bigg\{ R: F_{\bs{T}}(R)<\epsilon  \bigg\}. \label{eq1}
\end{align}
We will show now that
\begin{align}
&\underset{\bs T}{\sup}\underset{\epsilon_{n}\rightarrow\epsilon}{\lim} \sup \bigg\{ R: F_{\bs{T}}(R)<\epsilon_n \bigg\} \nonumber \\ 
&\leq \underset{\epsilon_{n}\rightarrow\epsilon}{\lim} \underset{\bs{T}}{\sup}\sup \bigg\{ R: F_{\bs{T}}(R)<\epsilon_n  \bigg\}. \label{eq2}
\end{align}
For any $n$ and any $\bs{T},$ it holds that
 \begin{align}
     \sup\bigg\{ R: F_{\bs{T}}(R)<\epsilon_n \bigg\} \leq \underset{\bs T}{\sup} \sup\bigg\{ R: F_{\bs{T}}(R)<\epsilon_n  \bigg\}.
 \nonumber \end{align}
 Thus, for any $\bs{T},$ we have
  \begin{align}
     &\underset{\epsilon_{n}\rightarrow\epsilon}{\lim} \sup\bigg\{ R: F_{\bs{T}}(R)<\epsilon_n \bigg\} \nonumber \\
     &\leq \underset{\epsilon_{n}\rightarrow\epsilon}{\lim} \underset{\bs T}{\sup} \sup\bigg\{ R: F_{\bs{T}}(R)<\epsilon_ n  \bigg\}.
 \nonumber \end{align}
 This implies \eqref{eq2}.
 
Next, we will show that
 \begin{align}
    &\underset{\epsilon_{n}\rightarrow\epsilon}{\lim} \underset{\bs{T}}{\sup}\sup \bigg\{ R: F_{\bs{T}}(R)<\epsilon_n  \bigg\}  \nonumber \\
    &\leq \underset{\bs{T}}{\sup}\sup \bigg\{ R: F_{\bs{T}}(R)<\epsilon  \bigg\}. \label{eq3}
\end{align}

For any $n$ and any $\bs{T},$ we have 
\begin{align}
\bigg\{ R: F_{\bs{T}}(R)<\epsilon_n  \bigg\} \subseteq \bigg\{ R: F_{\bs{T}}(R)<\epsilon  \bigg\}. \nonumber
\end{align}
Thus, that for any $n$ and any $\bs{T},$ we have
\begin{align}
\sup\bigg\{ R: F_{\bs{T}}(R)<\epsilon_n  \bigg\} &\leq \sup \bigg\{ R: F_{\bs{T}}(R)<\epsilon  \bigg\} \nonumber \\
&\leq \underset{\bs{T}}{\sup}\sup \bigg\{ R: F_{\bs{T}}(R)<\epsilon  \bigg\}.\nonumber
\end{align}
Therefore, for any $n,$
\begin{align}
\underset{\bs{T}}{\sup}\sup\bigg\{ R: F_{\bs{T}}(R)<\epsilon_n  \bigg\} \leq \underset{\bs{T}}{\sup}\sup \bigg\{ R: F_{\bs{T}}(R)<\epsilon  \bigg\}.\nonumber
\end{align}
This implies \eqref{eq3}.

Now, it follows from \eqref{eq1}, \eqref{eq2} and \eqref{eq3} that
\begin{align}
     &\underset{\epsilon_{n}\rightarrow\epsilon}{\lim} \underset{\bs{T}}{\sup}\sup \bigg\{ R: F_{\bs{T}}(R)<\epsilon_n  \bigg\} \nonumber \\ &=\underset{\bs{T}}{\sup}\sup \bigg\{ R: F_{\bs{T}}(R)<\epsilon  \bigg\}. \nonumber
\end{align}
This shows that the function $l$ is left-continuous at $\epsilon$ for any $\epsilon \in (0,1).$ It follows that $f_{\max}$ is left-continuous in $(0,1).$
\end{proof}

Let $0<\epsilon<1.$ It follows from Lemma \ref{nondecreasingfunctionleftcontinuous} that
\begin{align}
&\underset{0<\epsilon'<\epsilon}{\sup} \underset{ \substack{U \\{\substack{U\circlearrow{X} \circlearrow{Y}\\ I(U;X)-I(U;Y) \leq l(\epsilon')}}}}{\max} I(U;X) \nonumber \\
&=\underset{ \substack{U \\{\substack{U\circlearrow{X} \circlearrow{Y}\\ I(U;X)-I(U;Y) \leq l(\epsilon)}}}}{\max} I(U;X).
\label{epsilonprimeepsilon}
\end{align}
Let $0<\epsilon'<\epsilon$ be fixed arbitrarily. From \eqref{epsilonprimeepsilon}, it suffices to show that $$\underset{ \substack{U \\{\substack{U\circlearrow{X} \circlearrow{Y}\\ I(U;X)-I(U;Y) \leq l(\epsilon')}}}}{\max} I(U;X)$$ is an achievable $\epsilon$-UCR rate.
\subsubsection{If $l(\epsilon')=0$}  It is shown in \cite{part2} that when the terminals do not communicate over the channel, the UCR capacity is equal to 
\begin{align}
H_{0}=\underset{ \substack{U \\{\substack{U \circlearrow{X} \circlearrow{Y}\\ I(U;X)-I(U;Y) \leq 0}}}}{\max} I(U;X).  \nonumber
\end{align}
Hence, when the terminals do not communicate over the channel $\mbf W$,  $H_0$ is also an achievable $\epsilon$-UCR rate

\subsubsection{If $ l(\epsilon')>0$} We extend the CR generation scheme provided in \cite{part2} to arbitrary single-user channels. By continuity, it suffices to show that 
$$ \underset{ \substack{U \\{\substack{U \circlearrow{X} \circlearrow{Y}\\ I(U;X)-I(U;Y) \leq C'}}}}{\max} I(U;X)  $$ is an achievable  $\epsilon$-UCR rate for every $C'<l(\epsilon').$
Let $U$ be any random variable with alphabet $\mc U$ satisfying $U \circlearrow{X} \circlearrow{Y}$ and $I(U;X)-I(U;Y) \leq C'$. Let $\delta,\beta>0.$ We are going to show that $H=I(U;X)$ is an achievable $\epsilon$-UCR rate. Without loss of generality, assume that the distribution of $U$ is a possible type for block-length $n$.
For some $\mu>0,$ we let
{{\begin{align}
N_{1}&=\lfloor 2^{n[I(U;X)-I(U;Y)+3\mu]} \rfloor \nonumber\\
N_{2}&=\lfloor 2^{n[I(U;Y)-2\mu]}\rfloor. \nonumber
\end{align}}}For each pair $(i,j)$ with $1\leq i \leq N_{1}$ and $1\leq j \leq N_{2}$, we define a random sequence $\bs{U}_{i,j}\in\mathcal{U}^n$ of type $P_{U}$. Let $\mbf M=\bs{U}_{1,1},\hdots, \bs{U}_{N_{1},N_{2}}$  be the joint random variable of all $\bs{U}_{i,j}s.$  We define $\Phi_{\mbf M}$ as follows:
 Let $\Phi_{\mbf M}(X^n)=\bs{U}_{ij}$, if $\bs{U}_{ij}$ is jointly $UX$-typical with $X^n$ (either one if there are several). If no such $\bs{U}_{i,j}$ exists, then  $\Phi_{\mbf M}(X^n)$ is set to a constant sequence $\bs{u}_0$ different from all the  ${\bs{U}_{ij}}s$, jointly $UX$-typical with none of the realizations of $X^n$ and known to both terminals. We further define the following two sets
\begin{align}
    S_{1}(\mbf M)&=\{(x^{n},y^{n}):(\Phi_{\mbf M}(x^{n}),x^{n},y^{n}) \in \mathcal{T}_{U,X,Y}^{n}\} \nonumber
\end{align} and
\begin{align}
    &S_{2}(\mbf M) \nonumber \\
    &=\Big\{(x^{n},y^{n}):(x^{n},y^{n}) \in S_{1}(\mbf M) \ \text{s.t.} \ \bs{U}_{i,j}=\Phi_{\mbf M}(x^{n}) \nonumber \\   & \ \ \ \  \ \text{and} \ \exists \ \bs{U}_{i,\ell}\neq\bs{U}_{i,j} \ \text{jointly} \ UY\text{-typical with} \ y^{n} \nonumber \\
    & \ \ \ \ \ (\text{with the same first index} \ i)
\Big\}.\nonumber
\end{align}
It is proved in \cite{part2} that
\begin{align}
    \mathbb{E}_{\mbf M}\left[ \mbb P\left[(X^n,Y^n)\notin  S_{1}(\mbf M)\right]+\mbb P\left[(X^n,Y^n)\in  S_{2}(\mbf M)\right]\right]\leq \zeta(n),
    \label{averagebeta}
\end{align}
where $\zeta(n) \leq \epsilon-\epsilon'$ for sufficiently large $n$. 
We choose a realization $\mbf m=\bs{u}_{1,1},\hdots, \bs{u}_{N_1,N_2}$ satisfying:
\begin{align}
\mbb P\left[(X^n,Y^n)\notin  S_{1}(\mbf m)\right]+\mbb P\left[(X^n,Y^n)\in  S_{2}(\mbf m)\right]\leq \zeta(n).
\label{choiceofm} \end{align} 
From \eqref{averagebeta}, we know that such a realization exists. We denote $\Phi_{\mbf m}$ by $\Phi.$
We assume that each $\bs{u}_{i,j}, \ i=1\hdots N_1, \ j=1\hdots N_2,$  is known to both terminals.  
This means that  $N_{1}$ codebooks $C_{i}, 1\leq i \leq N_{1}$, are known to both terminals, where each codebook contains $N_{2}$ sequences, $ \bs{u}_{i,j}, \ j=1,\hdots, N_2$. 

Let $x^{n}$ be any realization of $X^n$ and $y^{n}$ be any realization of $Y^n.$  Let $f_1(x^n)=i$ if $\Phi(x^n)=\bs{u}_{i,j}$. Otherwise, if $\Phi(x^n)=\bs{u}_{0},$ then $f_1(x^n)=N_1+1.$.

  Since $ C'<l(\epsilon')$, we choose $\mu$ to be sufficiently small such that
      \begin{align}
     \frac{\log(N_1+1)}{n} 
     &\leq l(\epsilon')-\mu',
     \label{inequalitylogfSISO}
      \end{align}
for some $\mu'>0.$
 The message $i^\star=f_1(x^{n})$, with $i^\star\in\{1,\hdots,N_1+1\}$, is encoded to a sequence $t^n$ using a code sequence $(\Gamma^\star_n)_{n=1}^{\infty}$ with rate $\frac{\log(N_1+1)}{n} $ satisfying \eqref{inequalitylogfSISO}
 and with error probability $e(\Gamma^\star_n)$ satisfying $e(\Gamma^\star_n) \leq \epsilon',$ for sufficiently large $n.$
  From the definition of an achievable $\epsilon'$- transmission rate, we know that such a code sequence exists. The sequence $t^n$ is sent over the single-user channel $W_n$. Let $z^n$ be the channel output sequence. Terminal $B$ decodes the message $\tilde{i}^\star$ from the knowledge of $z^n.$
Let $\Psi(y^{n},z^n)=\bs{u}_{\tilde{i}^\star,j}$ if $\bs{u}_{\tilde{i}^\star,j}$ and $y^{n}$ are jointly $UY$-typical . If there is no such  $\bs{u}_{\tilde{i}^\star,j}$ or there are several, we set $\Psi(y^{n},z^n)=\bs{u}_0.$ 

For $c=I(U;X)+\mu+1,$  we have $\lvert \mc K \rvert = N_1 N_2+1 \leq 2^{nc}.$
We define for any $(i,j)\in \{1,\hdots,N_{1}\}\times\{1,\hdots,N_{2}\}$  the set
$$\mc R=\{ x^{n}\in\mathcal{X}^{n} \ \text{s.t.} \ (\bs{u}_{i,j},x^{n}) \ \text{jointly} \ UX\text{-typical}\}.$$
Then, it holds that 
\begin{align}
\mbb P[K=\bs{u}_{i,j}] &\overset{(a)}{=}\sum_{x^{n}\in\mc R}\mbb P[K=\bs{u}_{i,j}|X^n=x^{n}]P_{X}^n(x^{n}) \nonumber \\
&\leq \sum_{x^{n}\in\mc R}P_{X}^n(x^{n}) \nonumber \\
&=P_{X}^{n}(\{x^{n}: (\bs{u}_{i,j},x^{n}) \ \text{jointly} \ UX\text{-typical}\}) \nonumber\\
& = 2^{-nI(U;X)-\kappa(n)}, \nonumber
\end{align}
for some $\kappa(n)>0$ with $\underset{n\rightarrow \infty}{\lim} \frac{\kappa(n)}{n}=0$,
where $(a)$ follows because for  $(\bs{u}_{i,j},\mathbf{x})$ being not jointly $UX$-typical, we have $\mbb P[K=\bs{u}_{i,j}|X^n=x^{n}]=0.$ This yields
{{\begin{align}
H(K)\geq nI(U;X)-\kappa'(n)
\nonumber \end{align}}}
for some $\kappa'(n)>0$ with $\underset{n\rightarrow \infty}{\lim} \frac{\kappa'(n)}{n}=0.$

Therefore, for sufficiently large $n,$ it holds that
$\frac{H(K)}{n}>H-\delta.$
Clearly, it holds also that $\frac{1}{n}\bigg\lvert H(K)-\log\lvert \mc K \rvert \bigg\rvert \leq \kappa''(n) $
for some $\kappa''(n)>0$ with $\underset{n\rightarrow\infty}{\lim} \kappa''(n)=0.$ Therefore, for sufficiently large $n,$ it holds that $\kappa''(n)\leq \beta.$ 
 Let $I^\star=f_1(X^n)$ be the random message generated by Terminal $A$ and  $\tilde{I}^\star$  be the random message decoded by Terminal $B$. 

We have
\begin{align}
    \mbb P[K\neq L]
    &=\mbb P[K\neq L|I^\star=\tilde{I}^\star]\mbb P[I^\star=\tilde{I}^\star] \nonumber \\  
    &\quad+ \mbb P[K\neq L|I^\star\neq \tilde{I}^\star]\mbb P[I^\star\neq\tilde{I}^\star] \nonumber \\
        &\leq \mbb P[K\neq L|I^\star=\tilde{I}^\star]+ \mbb P[I^\star\neq\tilde{I}^\star].\nonumber
\end{align}
Let $\mathcal{D}_{\mbf m}= ``\Phi(X^n) \ \text{is equal to none of the} \  {\bs{u}_{i,j}}'s".$
We denote its complement by $\mc D_{\mbf m}^{c}.$
It holds that
\begin{align}
    &\mbb P[K\neq L|I^\star=\tilde{I}^\star] \nonumber \\
   &\overset{(a)}{=}\mbb P[K\neq L|I^\star=\tilde{I}^\star,\mathcal{D}_{\mbf m}^c]\mbb P[\mathcal{D}_{\mbf m}^c|I^\star=\tilde{I}^\star] \nonumber \\
   &\leq \mbb P[K\neq L|I^\star=\tilde{I}^\star,\mathcal{D}_{\mbf m}^c],\nonumber
\end{align}
where $(a)$ follows from $\mbb P[K\neq L|I^\star=\tilde{I}^\star,\mathcal{D}_{\mbf m}]=0,$ since conditioned on  $I^\star=\tilde{I}^\star$ and $\mathcal{D}_{\mbf m}$, we know that $K$ and $L$ are both equal to $\bs{u}_0$.
It follows that
\begin{align}
    &\mbb P[K\neq L] \nonumber \\
    &\leq \mbb P[K\neq L|I^\star=\tilde{I}^\star,\mathcal{D}_{\mbf m}^c]+ \mbb P[I^\star\neq\tilde{I}^\star] \nonumber \\
    &\leq \mbb P\left[(X^n,Y^n)\in  S_{1}^{c}(\mbf m)\cup S_{2}(\mbf m)\right]+\mbb P[I^\star\neq\tilde{I}^\star] \nonumber \\
    &\overset{(a)}{=}\mbb P\left[(X^n,Y^n)\notin  S_{1}(\mbf m)\right]+\mbb P\left[(X^n,Y^n)\in  S_{2}(\mbf m)\right] +\mbb P[I^\star\neq\tilde{I}^\star] \label{upperboundzwischenschritt} 
\end{align}
where $(a)$ follows because $S_{1}^{c}(\mbf m)$ and $S_{2}(\mbf m)$ are disjoint. 
 
It follows from \eqref{upperboundzwischenschritt} using \eqref{choiceofm} that
\begin{align}
        \mbb P[K\neq L] &\leq  \zeta(n)+ \mbb P[I^\star\neq\tilde{I}^\star],\nonumber\\
    &\overset{(a)}{\leq} \epsilon-\epsilon'+\epsilon' \nonumber \\
    &=\epsilon. \nonumber
\end{align}
where $(a)$ follows because $\zeta(n)\leq \epsilon-\epsilon'$ and $e(\Gamma^\star_n) \leq \epsilon'.$
This completes the proof of the lower-bound on the $\epsilon$-UCR capacity.
\section{Proof of the upper-bound in Theorem \ref{main theorem}}
\label{converseproof}

 Let $0<\epsilon<1.$
Let $H$ be any achievable $\epsilon$-UCR rate.  So, there exists a non-negative constant $c$ such that for every $\delta,\beta>0$ and for sufficiently large $n,$ there exists a permissible pair of random variables $(K,L)$ according to a fixed CR-generation protocol of block-length $n$ such that \eqref{errorcorrelated}, \eqref{cardinalitycorrelated}, \eqref{uniformity} and \eqref{ratecorrelated} are satisfied.
Define $\lambda(\beta)=\beta+ 2\beta c+\beta^2.$ 
Let $$ \gamma(\epsilon,\beta)=2\sqrt{\frac{\sqrt{\lambda(\beta)}}{1-\sqrt{\epsilon}}}$$ and
$$
    \kappa(\epsilon,\beta)=\epsilon+1-\left(1-4\frac{\lambda(\beta)}{\gamma(\epsilon,\beta)^2}\right)^2. $$
Define $\mathcal{B}_1=\{\beta:0<\beta<1 \ \text{and} \ \epsilon<\kappa(\epsilon,\beta)+\beta <1\}$ and $\mathcal{B}_{2}=\{\beta: 0<\lambda(\beta)<1\}.$ Let $\mc B=\mathcal{B}_1 \cap \mathcal{B}_{2}. $ The set $\mathcal{B}$ is clearly  non-empty since any sufficiently small $\beta>0$ is element of $\mathcal{B}.$ 
Assume without loss of generality the constant $\beta>0$ in \eqref{uniformity} is element of $\mathcal{B}.$ Define
\begin{align}
&R_{\epsilon,\sup} \nonumber \\
&= \sup\left\{R: \underset{n\rightarrow\infty}{\limsup} \ \mbb P\left[\frac{1}{n}\log\frac{P_{Z^n,T^n}(Z^n,T^n)}{P_{Z^n}(Z^n)P_{T^n}(T^n)}\leq R \right]\leq \epsilon  \right\}\nonumber \end{align}
and notice that $R_{\epsilon,\sup}\leq u(\epsilon).$
Define 
\begin{align}
&\mc E \nonumber \\& = \begin{aligned}[t]\Biggl\{&\mu>0\ \text{s.t.} \ \text{for infinitely many} \ n:  \\  &\kappa(\epsilon,\beta)+\beta<\mbb P\left[\frac{\log\frac{P_{Z^n,T^n}(Z^n,T^n)}{P_{Z^n}(Z^n)P_{T^n}(T^n)}}{n} \leq R_{\epsilon,\sup}+\mu\right] \Biggl\}.\end{aligned} \nonumber
\end{align}

From the definition of $R_{\epsilon,\sup}$ and since $\epsilon<\kappa(\epsilon,\beta)+\beta<1,$  we know that $\mc E$ is a non-emtpy set.
Let $\mu$ be an arbitrary element of $\mc E.$ For infinitely many $n,$ it holds that

 \begin{align}
\kappa(\epsilon,\beta)+\beta<\mbb P\left[\frac{1}{n}\log\frac{P_{Z^n,T^n}(Z^n,T^n)}{P_{Z^n}(Z^n)P_{T^n}(T^n)} \leq R_{\epsilon,\sup}+\mu \right]. \label{choiceepsilon}
 \end{align}
 \begin{claim}
 For sufficiently large $n$ satisfying \eqref{choiceepsilon}, it holds that
 \label{claim1}
\begin{align}
    \frac{H(K|Y^n)}{n}\leq u(\epsilon)+\zeta(n,\epsilon,\beta,\mu), \nonumber
 \end{align}
 where  $u(\epsilon)$ is defined in \eqref{uepsilon} and where $\zeta(n,\epsilon,\beta,\mu)=\mu+\gamma(\epsilon,\beta)+\frac{2}{n}\log\frac{1}{\beta}.$
 \end{claim}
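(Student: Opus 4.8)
The plan is to bound $H(K|Y^n)$ by pairing a change-of-measure estimate for the channel with the near-uniformity of $K$. I would work with the input/output pair $(T^n,Z^n)$ that the fixed protocol induces and with the protocol-specific threshold $R_{\epsilon,\sup}$; since this input is a single admissible process $\bs T$, one has $R_{\epsilon,\sup}\le u(\epsilon)$, so it suffices to prove the bound with $R_{\epsilon,\sup}$ in place of $u(\epsilon)$ and to collect the remaining terms into $\zeta(n,\epsilon,\beta,\mu)$.

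The core device is an auxiliary measure $\tilde{\mbb P}$ under which $Z^n$ is drawn from the true output marginal $P_{Z^n}$ independently of $(X^n,Y^n)$, and hence of $K=\Phi(X^n)$ and $T^n=\Lambda(X^n)$, while the joint law of $(X^n,Y^n,T^n)$ is left unchanged. Then $d\mbb P/d\tilde{\mbb P}=2^{\,i(T^n;Z^n)}$, so on $\mc S=\{\tfrac1n i(T^n;Z^n)\le R_{\epsilon,\sup}+\mu\}$ this density is at most $2^{n(R_{\epsilon,\sup}+\mu)}$ and $\mbb P[\mc A\cap\mc S]\le 2^{n(R_{\epsilon,\sup}+\mu)}\tilde{\mbb P}[\mc A]$ for every event $\mc A$. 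I would then play three high-probability events against one another: agreement $\{K=L\}$ with $\mbb P[K=L]\ge 1-\epsilon$ by \eqref{errorcorrelated}; the density event $\mc S$ with $\mbb P[\mc S]>\kappa(\epsilon,\beta)+\beta$ by \eqref{choiceepsilon}; and a flatness event $\mc F$ for $K$ given $Y^n$ with $\mbb P[\mc F]\ge 1+\epsilon-\kappa(\epsilon,\beta)$. The constant $\kappa(\epsilon,\beta)$ is calibrated so that a union bound over the three complements leaves $\mbb P[\{K=L\}\cap\mc S\cap\mc F]>\beta$.

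The main obstacle is to produce the flatness event and its probability estimate, which is exactly where near-uniformity is used. Taking $\mc F=\{-\tfrac1n\log P_{K|Y^n}(K|Y^n)\ge\tfrac1n H(K|Y^n)-\gamma(\epsilon,\beta)\}$, I would bound the second moment of the centered conditional information density of $K$ by $\lambda(\beta)=\beta+2\beta c+\beta^2$, invoking \eqref{uniformity} and \eqref{cardinalitycorrelated} (this is the genuinely technical estimate, relegated to an auxiliary lemma; the square root in $\gamma(\epsilon,\beta)$ reflects a Cauchy–Schwarz step, and a Chebyshev inequality at level $\gamma(\epsilon,\beta)/2$ then yields $\mbb P[\mc F]\ge(1-4\lambda(\beta)/\gamma(\epsilon,\beta)^2)^2=1+\epsilon-\kappa(\epsilon,\beta)$). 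On $\mc F$ one has $P_{K|Y^n}(K|Y^n)\le 2^{-H(K|Y^n)+n\gamma(\epsilon,\beta)}$; since $L=\Psi(Y^n,Z^n)$ and $Z^n\perp(X^n,Y^n)$ under $\tilde{\mbb P}$, averaging the guessed value over the independent $Z^n$ gives $\tilde{\mbb P}[\{K=L\}\cap\mc F]\le 2^{-H(K|Y^n)+n\gamma(\epsilon,\beta)}$.

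Finally I would chain the estimates,
\begin{align}
\beta<\mbb P[\{K=L\}\cap\mc S\cap\mc F]\le 2^{n(R_{\epsilon,\sup}+\mu)}\tilde{\mbb P}[\{K=L\}\cap\mc F]\le 2^{\,n(R_{\epsilon,\sup}+\mu+\gamma(\epsilon,\beta))-H(K|Y^n)}, \nonumber
\end{align}
take base-$2$ logarithms and divide by $n$ to get $\tfrac1n H(K|Y^n)<R_{\epsilon,\sup}+\mu+\gamma(\epsilon,\beta)+\tfrac1n\log\tfrac1\beta$. Using $R_{\epsilon,\sup}\le u(\epsilon)$ and absorbing the slack incurred when the flatness level is expressed through \eqref{uniformity} into a second $\tfrac1n\log\tfrac1\beta$ term, this becomes $\tfrac1n H(K|Y^n)\le u(\epsilon)+\mu+\gamma(\epsilon,\beta)+\tfrac2n\log\tfrac1\beta=u(\epsilon)+\zeta(n,\epsilon,\beta,\mu)$, which is the claim.
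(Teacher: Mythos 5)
Your proof is correct, and it takes a genuinely different route through the change-of-measure step than the paper does. The paper follows the restrict-and-normalize pattern: it conditions the original law $P_{X^nY^nZ^n}$ on the good set $\mc S=\mc S_1\cap\mc S_2\cap\mc S_3$ (whose probability exceeds $\beta$ by Lemma \ref{probS}), and then works with information quantities under the tilted law --- first $H(K|Y^n)\leq n\gamma(\epsilon,\beta)+\log\frac{1}{\beta}+H(\tilde K|\tilde Y^n)$ (Lemma \ref{upperboundlogcardinality}), then $H(\tilde K|\tilde Y^n)=I(\tilde K;\tilde Z^n|\tilde Y^n)$ because $\tilde K=\tilde L$ almost surely, and finally $\frac{1}{n}I(\tilde K;\tilde Z^n|\tilde Y^n)\leq R_{\epsilon,\sup}+\mu+\frac{1}{n}\log\frac{1}{\beta}$ (Lemma \ref{mutinftilektildezcondtildey}), which requires a Markov-chain manipulation under the tilted measure and two relative-entropy sign arguments. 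You instead change the channel rather than the event: your $\tilde{\mbb P}$ keeps $(X^n,Y^n,T^n)$ intact but draws $Z^n$ from $P_{Z^n}$ independently, so the likelihood ratio is exactly $2^{i(T^n;Z^n)}$, bounded by $2^{n(R_{\epsilon,\sup}+\mu)}$ on $\mc S_1$; the agreement probability under $\tilde{\mbb P}$, intersected with the flatness event, is then bounded by $2^{n\gamma(\epsilon,\beta)-H(K|Y^n)}$ by a one-line guessing argument (Bob's decision $\Psi(Y^n,Z^n)$ is independent of $X^n$ given $Y^n$ under $\tilde{\mbb P}$, so it can do no better than the largest conditional probability allowed on the flat set). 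Chaining $\beta<\mbb P[\{K=L\}\cap\mc S\cap\mc F]\leq 2^{n(R_{\epsilon,\sup}+\mu)}\tilde{\mbb P}[\{K=L\}\cap\mc F]$ and taking logarithms finishes the proof with purely event-level manipulations. Both routes lean on the same auxiliary machinery --- the variance bound of Lemma \ref{upperboundvariance}, the Chebyshev--Markov flatness estimate of Lemma \ref{boundprobsetd}, the calibration of $\kappa(\epsilon,\beta)$, and the union-bound count that leaves probability $\beta$ --- and both use $R_{\epsilon,\sup}\leq u(\epsilon)$ at the end. What your version buys is economy: it dispenses entirely with Lemmas \ref{upperboundlogcardinality} and \ref{mutinftilektildezcondtildey} and with entropy identities under the tilted measure, and it yields the slightly sharper remainder $\mu+\gamma(\epsilon,\beta)+\frac{1}{n}\log\frac{1}{\beta}$, a single logarithmic term where the paper accumulates two; since $0<\beta<1$ this implies the claimed bound with $\zeta(n,\epsilon,\beta,\mu)=\mu+\gamma(\epsilon,\beta)+\frac{2}{n}\log\frac{1}{\beta}$ a fortiori (your remark about ``absorbing slack'' into the second logarithmic term is unnecessary --- the inequality $\frac{1}{n}\log\frac{1}{\beta}\leq\frac{2}{n}\log\frac{1}{\beta}$ is all that is needed). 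What the paper's version buys is that its tilted-measure entropy bounds follow the template of the strong-converse literature it cites, which makes the argument extend more directly to settings where the quantity of interest is itself an information measure rather than an event probability.
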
 
In order to prove the claim, we will use a change of measure argument. To prepare this, we need some technicalities. 
Let $\mc B=\mc K\times \mc Y^n.$ Consider now the set
\begin{align}
    &\mc D \nonumber \\
    &=\Big\{ (k,y^n)\in \mc B:  \frac{\log\frac{1}{P_{K|Y^n}(k|y^n)}}{n} \geq \frac{H(K|Y^n)}{n}-\gamma(\epsilon,\beta) \Big\}.
\nonumber \end{align}
Let $\mc A=\mc X^{n}\times \mc Y^{n} \times \mc Z^{n}.$
Define the sets 
\begin{align}
     &\mc S_{1} \nonumber \\
     &=\Big\{ (x^n,y^n,z^n)\in \mc A: \frac{ \log\frac{P_{Z^n|T^n}(z^n|\Lambda(x^n))}{P_{Z^n}(z^n)}}{n}  \leq R_{\epsilon,\sup}+\mu\Big\},\nonumber
\end{align}
\begin{align}
    \mc S_{2}=\{(x^n,y^n,z^n) \in \mc A: \Phi(x^n)=\Psi(y^n,z^n)   \}, \nonumber
\end{align}
and
\begin{align}
    \mc S_{3}=\{(x^n,y^n,z^n)\in \mc A: (\Phi(x^n),y^n)\in \mc D\}. \nonumber
\end{align}

Let $\mc S=\mc S_{1} \cap \mc S_{2} \cap \mc S_{3}.$  We introduce now and prove the following lemma. \begin{lemma} 
For sufficiently large $n$ satisfying \eqref{choiceepsilon}, we have
\label{probS}
\begin{align*}
   \mbb P \left[ (X^n,Y^n,Z^n)\in \mc S\right]\geq \beta>0. \nonumber  
\end{align*}
\end{lemma}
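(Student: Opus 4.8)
The plan is to lower-bound $\mbb P[(X^n,Y^n,Z^n)\in\mc S]$ by controlling the probability of the complementary event through a union bound. Since $\mc S=\mc S_1\cap\mc S_2\cap\mc S_3$, we have
\begin{align*}
\mbb P[(X^n,Y^n,Z^n)\in\mc S]\geq 1-\sum_{i=1}^{3}\mbb P[(X^n,Y^n,Z^n)\in\mc S_i^{c}],
\end{align*}
so it suffices to produce an upper bound on each of the three complementary probabilities and then, invoking the definitions of $\gamma(\epsilon,\beta)$, $\kappa(\epsilon,\beta)$ and $\lambda(\beta)$, to verify that the surviving quantity is at least $\beta$. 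The three terms are governed by, respectively, the definition of $R_{\epsilon,\sup}$ through \eqref{choiceepsilon}, the error constraint \eqref{errorcorrelated}, and a Chebyshev-type concentration estimate.

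The term $\mc S_2$ is immediate: $(X^n,Y^n,Z^n)\in\mc S_2^{c}$ is exactly the disagreement event $\{\Phi(X^n)\neq\Psi(Y^n,Z^n)\}=\{K\neq L\}$, so \eqref{errorcorrelated} gives $\mbb P[(X^n,Y^n,Z^n)\in\mc S_2^{c}]\leq\epsilon$. For $\mc S_1$ the key observation is that $T^n=\Lambda(X^n)$ is a deterministic function of $X^n$, so that on every realization the quantity defining $\mc S_1$ coincides with the normalized information density,
\begin{align*}
\frac1n\log\frac{P_{Z^n|T^n}(Z^n|\Lambda(X^n))}{P_{Z^n}(Z^n)}=\frac1n\log\frac{P_{Z^n,T^n}(Z^n,T^n)}{P_{Z^n}(Z^n)P_{T^n}(T^n)}.
\end{align*}
Hence $\mbb P[(X^n,Y^n,Z^n)\in\mc S_1]$ equals the probability appearing in \eqref{choiceepsilon}, and for every $n$ satisfying \eqref{choiceepsilon} we obtain $\mbb P[(X^n,Y^n,Z^n)\in\mc S_1]>\kappa(\epsilon,\beta)+\beta$, i.e. $\mbb P[(X^n,Y^n,Z^n)\in\mc S_1^{c}]<1-\kappa(\epsilon,\beta)-\beta$.

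The term $\mc S_3$ is where the real work lies. By construction, $(X^n,Y^n,Z^n)\in\mc S_3^{c}$ means that the conditional self-information $V=\frac1n\log\frac{1}{P_{K|Y^n}(K|Y^n)}$ lies more than $\gamma(\epsilon,\beta)$ below its mean $\mbb E[V]=\frac1n H(K|Y^n)$, so I would apply Chebyshev's inequality to $V$ and get $\mbb P[(X^n,Y^n,Z^n)\in\mc S_3^{c}]\leq\mathrm{var}(V)/\gamma(\epsilon,\beta)^2$. The main obstacle is the accompanying second-moment estimate $\mathrm{var}(V)\leq\lambda(\beta)$: this is the delicate step, to be extracted from a dedicated variance lemma in which one splits the support according to whether the atom has mass below or above $1/e$, uses the concavity of $\ln^2(\cdot)$ on $[e,\infty)$ together with the maximizers of $y\ln^2(1/y)$ and $y\ln(1/y)$ on $(0,1]$, and finally invokes the near-uniformity \eqref{uniformity} and the cardinality bound \eqref{cardinalitycorrelated} (through $\frac1n\log|\mc K|\leq c$) to collapse the remaining terms into $\lambda(\beta)=\beta+2\beta c+\beta^2$. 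The subtlety here, and the reason this is the hard part, is that the self-information is conditional on $Y^n$, so the uniformity of $K$ must be leveraged with care. Granting this estimate, which holds for sufficiently large $n$, Chebyshev yields $\mbb P[(X^n,Y^n,Z^n)\in\mc S_3^{c}]\leq\lambda(\beta)/\gamma(\epsilon,\beta)^2$.

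It remains to assemble the bounds. Setting $a=4\lambda(\beta)/\gamma(\epsilon,\beta)^2$ and recalling $\gamma(\epsilon,\beta)^2=4\sqrt{\lambda(\beta)}/(1-\sqrt{\epsilon})$, a short computation gives $a=\sqrt{\lambda(\beta)}\,(1-\sqrt{\epsilon})$, so $0<a<1$ because $\beta\in\mc B\subseteq\mc B_2$ forces $0<\lambda(\beta)<1$ while $0<1-\sqrt{\epsilon}<1$. Combining the three estimates for sufficiently large $n$ satisfying \eqref{choiceepsilon},
\begin{align*}
\mbb P[(X^n,Y^n,Z^n)\in\mc S]\geq\kappa(\epsilon,\beta)+\beta-\epsilon-\frac{\lambda(\beta)}{\gamma(\epsilon,\beta)^2}.
\end{align*}
Since $\kappa(\epsilon,\beta)-\epsilon=1-(1-a)^2=2a-a^2$ and $\lambda(\beta)/\gamma(\epsilon,\beta)^2=a/4$, the right-hand side equals $\beta+\tfrac74 a-a^2=\beta+a(\tfrac74-a)\geq\beta$, the last inequality holding because $0<a<1<\tfrac74$. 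This establishes $\mbb P[(X^n,Y^n,Z^n)\in\mc S]\geq\beta>0$, as claimed.
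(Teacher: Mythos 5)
Your union-bound skeleton and your handling of $\mc S_1$ and $\mc S_2$ are exactly the paper's: $\mc S_2^c$ is the event $\{K\neq L\}$, bounded by $\epsilon$ via \eqref{errorcorrelated}, and $\mc S_1^c$ has probability less than $1-\kappa(\epsilon,\beta)-\beta$ for the infinitely many $n$ satisfying \eqref{choiceepsilon}. The gap is in your treatment of $\mc S_3$. You propose to apply Chebyshev's inequality to the conditional self-information $V=\frac1n\log\frac{1}{P_{K|Y^n}(K|Y^n)}$ and assert that a variance lemma yields $\mathrm{var}(V)\le\lambda(\beta)$. No such bound can be proved from \eqref{uniformity} and \eqref{cardinalitycorrelated}, because those conditions constrain only the marginal of $K$, while $\mathrm{var}(V)$ depends on the joint law of $(K,Y^n)$. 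Concretely, let each source symbol be $X_i=(A_i,B_i,C_i)$ and $Y_i=(A_i,B_i)$ with all components uniform and independent, and let $K=\Phi(X^n)=B^n$ if $A_1=0$ and $K=C^n$ if $A_1=1$. Then $K$ is exactly uniform, so \eqref{uniformity} holds, yet $V=0$ with probability $1/2$ and $V=\frac1n\log\lvert\mc K\rvert$ with probability $1/2$; hence $\mathrm{var}(V)$ is of order $c^2$, not $O(\beta)$. The paper's variance lemma (Lemma \ref{upperboundvariance}) concerns the \emph{unconditional} self-information $\frac1n\log\frac1{P_K(K)}$, where near-uniformity really does pin the second moment to roughly $\bigl(\frac1n H(K)+\beta\bigr)^2$; conditioning on $Y^n$ destroys this, since the deviation defining $\mc D$ is measured from $\frac1nH(K|Y^n)$, which may lie far below $\frac1n\log\lvert\mc K\rvert$.

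This is precisely why the paper does not prove the $\mc S_3$ bound by a single Chebyshev step, but by a two-stage argument (Lemmas \ref{boundprobsett} and \ref{boundprobsetd} in the appendix). First, Chebyshev applied to the unconditional self-information, with the half-threshold $\gamma(\epsilon,\beta)/2$ and Lemma \ref{upperboundvariance}, gives $\mbb P[K\in\mc L]\ge 1-4\lambda(\beta)/\gamma(\epsilon,\beta)^2$ for $\mc L=\bigl\{k:\frac1n\log\frac1{P_K(k)}\ge\frac1nH(K)-\frac{\gamma(\epsilon,\beta)}2\bigr\}$. Second, for each $k\in\mc L$, Markov's inequality applied to the random variable $P_{K|Y^n}(k|Y^n)$, whose expectation is $P_K(k)$, shows that $P_{K|Y^n}(k|Y^n)\le 2^{n\gamma(\epsilon,\beta)-H(K|Y^n)}$ fails with probability at most $P_K(k)2^{H(K|Y^n)-n\gamma(\epsilon,\beta)}\le 2^{-n\gamma(\epsilon,\beta)/2}$, where the bound $P_K(k)\le 2^{n\gamma(\epsilon,\beta)/2-H(K)}$ (valid because $k\in\mc L$) and the inequality $H(K|Y^n)\le H(K)$ absorb everything; the slack between $\gamma/2$ and $\gamma$ is exactly what makes this step work. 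Combining the two stages gives $\mbb P[(K,Y^n)\in\mc D]\ge\bigl(1-4\lambda(\beta)/\gamma(\epsilon,\beta)^2\bigr)^2$ for large $n$ --- this squared form is the reason $\kappa(\epsilon,\beta)$ is defined with a square --- and the union bound then closes as in your final paragraph. So your assembly arithmetic is fine (indeed it has extra slack, since the correct bound $1-(1-a)^2=2a-a^2$ on $\mbb P[\mc S_3^c]$ is weaker than your claimed $a/4$), but the central estimate you rely on is false as stated and must be replaced by this Chebyshev-plus-Markov bridge from the unconditional to the conditional self-information.
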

Analogously to \cite{strongconverse}, we change the probability measure by defining 
\begin{align}
  &P_{\tilde{X}^n,\tilde{Y}^n,\tilde{Z}^n}(x^n,y^n,z^n) \nonumber \\
  &=\frac{P_{X^n,Y^n,Z^n}(x^n,y^n,z^n) \mbf 1 \left[(x^n,y^n,z^n)\in \mc S  \right]}{\mbb P \left[(X^n,Y^n,Z^n) \in \mc S\right]}, \nonumber
\end{align}
where $\mbf 1[\cdot]$ is the indicator function.

\begin{proof}
It holds for $   \kappa(\epsilon,\beta)=\epsilon+\left[1-\left(1-4\frac{\lambda(\beta)}{\gamma(\epsilon,\beta)^2}\right)^2\right]$ that
\begin{align}
    &\mbb P\left[ (X^n,Y^n,Z^n) \in \mc S\right] \nonumber \\
    &\geq 1- \mbb P\left[(X^n,Y^n) \notin \mc S_{3}\right]-\mbb P\left[(X^n,Y^n,Z^n) \notin \mc S_{2}\right]  \nonumber \\
    &\quad- \mbb P\left[(X^n,Y^n,Z^n)\notin \mc S_{1}\right]
    \nonumber \\
   &= 1-\mbb P\left[(K,Y^n)\notin\mc D\right]-\mbb P\left[K\neq L\right] -\mbb P\left[(X^n,Y^n,Z^n)\notin \mc S_{1}\right]  \nonumber \\
   &\overset{(a)}{\geq} 1-\left[1-\left(1-4\frac{\lambda(\beta)}{\gamma(\epsilon,\beta)^2}\right)^2\right]-\epsilon -\mbb P\left[(X^n,Y^n,Z^n)\notin \mc S_{1}\right] \nonumber \\
   &=1-\kappa(\epsilon,\beta)-\mbb P\left[\frac{1}{n}\log\frac{P_{Z^n,T^n}(Z^n,T^n)}{P_{Z^n}(Z^n)P_{T^n}(T^n)} > R_{\epsilon,\sup}+\mu \right] \nonumber \\
   &\overset{(b)}{\geq}  1-\kappa(\epsilon,\beta)-\left( 1-\kappa(\epsilon,\beta)-\beta\right)                    \nonumber\\
   &=\beta, \nonumber
   \end{align}
where $(a)$ follows from Lemma \ref{boundprobsetd} in the appendix and $(b)$ follows from the choice of $\mu$ in \eqref{choiceepsilon}.
\end{proof}

From Lemma \ref{probS}, we know that  $(\tilde{X}^n,\tilde{Y}^n,\tilde{Z}^n)$ is well-defined.

Let $\tilde{K}=\Phi(\tilde{X}^n).$ Let $\tilde{T}^n=\Lambda(\tilde{X}^n)$ and
 $\tilde{L}=\Psi(\tilde{Y}^n,\tilde{Z}^n).$ Here, $\tilde{K}$ is equal to $\tilde{L}$ with probability one. Furthermore, for every $(x^n,z^n)\in \text{supp}(\tilde{X}^n)\times \text{supp}(\tilde{Z}^n),$ we have 
\begin{align}
      \frac{1}{n} \log\frac{P_{T^n,Z^n}(\Lambda(x^n),z^n)}{P_{T^n}(\Lambda(x^n))P_{Z^n}(z^n)} \leq R_{\epsilon,\sup}+\mu. \nonumber
\end{align}
It follows that
\begin{align}
    \frac{1}{n}\mbb E\left[\log\frac{P_{T^nZ^n}(\Lambda(\tilde{X}^n),\tilde{Z}^n)}{P_{T^n}(\Lambda(\tilde{X}^n))P_{Z^n}(\tilde{Z}^n)}\right]\leq R_{\epsilon,\sup}+\mu. \label{upperboundnewmutinf} \end{align}
    Let us now introduce the following two lemmas 
\begin{lemma} It holds that
\label{upperboundlogcardinality}
\begin{align}
    H(K|Y^n) &\leq n\gamma(\epsilon,\beta)+\log\frac{1}{\beta}+H(\tilde{K}|\tilde{Y}^n). \nonumber
\end{align}
\end{lemma}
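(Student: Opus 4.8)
The plan is to transfer a pointwise lower bound that holds on $\mc D$ to the tilted measure, and then to exchange the original conditional law $P_{K|Y^n}$ for the tilted conditional law $P_{\tilde K|\tilde Y^n}$ at a controlled cost. By the very definition of $\mc D$, every pair $(k,y^n)\in\mc D$ satisfies $\log\frac{1}{P_{K|Y^n}(k|y^n)}\geq H(K|Y^n)-n\gamma(\epsilon,\beta)$. Since the tilted triple is supported on $\mc S\subseteq\mc S_3$ and $\tilde K=\Phi(\tilde X^n)$, the support of $(\tilde K,\tilde Y^n)$ lies inside $\mc D$. Taking the expectation of the displayed inequality under the law of $(\tilde K,\tilde Y^n)$ therefore gives
\[
H(K|Y^n)-n\gamma(\epsilon,\beta)\le \mbb E\left[\log\frac{1}{P_{K|Y^n}(\tilde K|\tilde Y^n)}\right],
\]
so it remains to bound this cross term by $\log\frac{1}{\beta}+H(\tilde K|\tilde Y^n)$.

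Next I would split the cross term through the joint and marginal laws,
\[
\mbb E\left[\log\frac{1}{P_{K|Y^n}(\tilde K|\tilde Y^n)}\right]=\mbb E\left[\log\frac{1}{P_{K,Y^n}(\tilde K,\tilde Y^n)}\right]-\mbb E\left[\log\frac{1}{P_{Y^n}(\tilde Y^n)}\right],
\]
and treat the two terms separately. From the definition of the change of measure together with Lemma \ref{probS}, for every $(k,y^n)$ we have $P_{\tilde K,\tilde Y^n}(k,y^n)\le \frac{1}{\mbb P[(X^n,Y^n,Z^n)\in\mc S]}P_{K,Y^n}(k,y^n)\le \frac{1}{\beta}P_{K,Y^n}(k,y^n)$, hence $P_{K,Y^n}(k,y^n)\ge \beta\,P_{\tilde K,\tilde Y^n}(k,y^n)$ on the support. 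This yields
\[
\mbb E\left[\log\frac{1}{P_{K,Y^n}(\tilde K,\tilde Y^n)}\right]\le \log\frac{1}{\beta}+H(\tilde K,\tilde Y^n).
\]

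For the marginal term I would invoke nonnegativity of relative entropy: since $D(P_{\tilde Y^n};P_{Y^n})\ge 0$, we obtain $\mbb E[\log\frac{1}{P_{Y^n}(\tilde Y^n)}]\ge H(\tilde Y^n)$, so subtracting this term contributes at most $-H(\tilde Y^n)$. Combining the two estimates gives
\[
\mbb E\left[\log\frac{1}{P_{K|Y^n}(\tilde K|\tilde Y^n)}\right]\le \log\frac{1}{\beta}+H(\tilde K,\tilde Y^n)-H(\tilde Y^n)=\log\frac{1}{\beta}+H(\tilde K|\tilde Y^n),
\]
which, chained with the first display, proves the lemma.

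The main obstacle is the marginal term. Restricting to $\mc S$ can inflate $P_{\tilde Y^n}$ relative to $P_{Y^n}$, so one cannot hope to absorb $\mbb E[\log\frac{1}{P_{Y^n}(\tilde Y^n)}]$ into a second $\log\frac1\beta$ penalty. The key realization is that this term enters with a favorable sign and needs only the trivial bound from $D(P_{\tilde Y^n};P_{Y^n})\ge 0$; consequently only the joint term picks up the $\log\frac1\beta$ loss, which is precisely the claimed inequality. A minor point to verify is that the cross term is well defined, i.e. $P_{K|Y^n}(k|y^n)>0$ on the support of $(\tilde K,\tilde Y^n)$; this holds because $P_{\tilde K,\tilde Y^n}(k,y^n)>0$ forces $P_{K,Y^n}(k,y^n)>0$.
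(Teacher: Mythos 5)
Your proposal is correct and takes essentially the same approach as the paper: both arguments rest on the fact that the support of $(\tilde K,\tilde Y^n)$ lies in $\mc D$, pay the $\log\frac{1}{\beta}$ penalty through the change-of-measure bound from Lemma \ref{probS}, and discard the nonnegative relative entropy $D(P_{\tilde Y^n}\|P_{Y^n})$ to land on $H(\tilde K|\tilde Y^n)$. The only difference is bookkeeping: the paper keeps the chain of inequalities pointwise and converts to an expectation at the end (via a minimum bounded by a mean), whereas you integrate the defining inequality of $\mc D$ first and then decompose the resulting cross-entropy; the two orderings are equivalent.
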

\begin{proof}
Consider any $(k,y^n) \in \text{supp}(\tilde{K})\times \text{supp}(\tilde{Y}^n).$ 
If $(k,y^n) \not \in \mc D,$ it holds that

$$\frac{P_{\tilde{K},\tilde{Y}^n}(k,y^n)}{P_{Y^n}(y^n)}=0.$$

Now, if $(k,y^n) \in \mc D,$ we have using Lemma \ref{probS}

\begin{align}
   &\frac{P_{\tilde{K},\tilde{Y}^n}(k,y^n)}{P_{Y^n}(y^n)}\nonumber\\
    &=\frac{1}{P_{Y^n}(y^n)}\sum_{\substack{x^n,z^n \\(x^n,y^n,z^n)\in \mc S \\ \Phi(x^n)=k}} P_{\tilde{X}^n,\tilde{Y}^n,\tilde{Z}^n}(x^n,y^n,z^n)\nonumber \\
    &=\frac{1}{P_{Y^n}(y^n)}\sum_{\substack{x^n,z^n \\(x^n,y^n,z^n)\in \mc S \\ \Phi(x^n)=k}} \frac{P_{X^n,Y^n,Z^n}(x^n,y^n,z^n)}{\mbb P\left[(X^n,Y^n,Z^n)\in \mc S\right]} \nonumber \\
      &\leq \frac{P_{K,Y^n}(k,y^n)}{P_{Y^n}(y^n)\mbb P\left[(X^n,Y^n,Z^n)\in \mc S\right]} \nonumber \\
       &= \frac{P_{K|Y^n}(k|y^n)}{\mbb P\left[(X^n,Y^n,Z^n)\in \mc S\right]}  \nonumber\\
         &\overset{(a)}{\leq} \frac{2^{n\gamma(\epsilon,\beta)}}{2^{H(K|Y^n)} \mbb P\left[(X^n,Y^n,Z^n)\in \mc S\right]} \nonumber \\
         &\leq  \frac{2^{n\gamma(\epsilon,\beta)}}{2^{H(K|Y^n)} \beta} , \nonumber
         \end{align}
where $(a)$ follows because $(k,y^n)\in \mc D.$ 
Therefore, for every $(k,y^n) \in \text{supp}(P_{\tilde{K},\tilde{Y}^n}),$ we have
\begin{align}
	    \frac{P_{\tilde{K},\tilde{Y}^n}(k,y^n)}{P_{Y^n}(y^n)} \leq \frac{2^{n\gamma(\epsilon,\beta)}}{2^{H(K|Y^n)} \beta}, \nonumber
\end{align}
which yields
\begin{align}
    2^{H(K|Y^n)} \leq \frac{2^{n\gamma(\epsilon,\beta)}}{\beta} \frac{1}{\frac{P_{\tilde{K},\tilde{Y}^n}(k,y^n)}{P_{Y^n}(y^n)}}. \nonumber
\end{align}
This implies that for any $(k,y^n) \in \text{supp}(P_{\tilde{K},\tilde{Y}^n})$ 
\begin{align}
H(K|Y^n)&\leq \log\frac{2^{n\gamma(\epsilon,\beta)}}{\beta}-\log\frac{P_{\tilde{K},\tilde{Y}^n}(k,y^n)}{P_{Y^n}(y^n)}. \nonumber 
\end{align}
As a result, it follows that
\begin{align}
 H(K|Y^n) &\leq \log\frac{2^{n\gamma(\epsilon,\beta)}}{\beta} \nonumber \\
 &\quad+\underset{(k,y^n)\in \text{supp}(P_{\tilde{K},\tilde{Y}^n})}{\min}-\log\frac{P_{\tilde{K},\tilde{Y}^n}(k,y^n)}{P_{Y^n}(y^n)}. \nonumber   
\end{align}
Now, it holds that
\begin{align}
    &\underset{(k,y^n)\in \text{supp}(P_{\tilde{K},\tilde{Y}^n})}{\min}-\log\frac{P_{\tilde{K},\tilde{Y}^n}(k,y^n)}{P_{Y^n}(y^n)} \nonumber \\ &\leq \mbb E\left[-\log\frac{P_{\tilde{K},\tilde{Y}^n}(\tilde{K},\tilde{Y}^n)}{P_{Y^n}(\tilde{Y}^n)}\right] \nonumber \\ 
    &=\mbb E\left[-\log P_{\tilde{K}|\tilde{Y}^n}(\tilde{K}|\tilde{Y}^n) \right]-\mbb E\left[\log\frac{P_{\tilde{Y}^n}(\tilde{Y}^n)}{P_{Y^n}(\tilde{Y}^n)}\right]\nonumber \\
    &=H(\tilde{K}|\tilde{Y}^n)-D(P_{\tilde{Y}^n}||P_{Y^n}) \nonumber \\ \nonumber \\
    &\leq H(\tilde{K}|\tilde{Y}^n). \nonumber
\end{align}
 It follows that
\begin{align}
H(K|Y^n) &\leq \log\frac{2^{n\gamma(\epsilon,\beta)}}{\beta}+H(\tilde{K}|\tilde{Y}^n) \nonumber \\
&=n\gamma(\epsilon,\beta)+\log\frac{1}{\beta}+H(\tilde{K}|\tilde{Y}^n).\nonumber
\end{align}
\end{proof}
\begin{lemma} It holds that
\label{mutinftilektildezcondtildey}
\begin{align}
    \frac{1}{n} I(\tilde{K};\tilde{Z}^{n}|\tilde{Y}^{n})\leq R_{\epsilon,\sup}+\mu+\frac{1}{n}\log\frac{1}{\beta}. \nonumber
\end{align}
\end{lemma}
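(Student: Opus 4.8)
The plan is to bound $I(\tilde K;\tilde Z^n\mid\tilde Y^n)$ by first stripping off the encoder map with a data-processing step and then performing a \emph{single} change of measure back to the original distribution $P_{X^n,Y^n,Z^n}$; the penalty incurred by that change of measure is exactly where the term $\frac1n\log\frac1\beta$ comes from, via the lower bound $\mbb P[(X^n,Y^n,Z^n)\in\mc S]\ge\beta$ from Lemma \ref{probS}.

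First I would apply the data-processing inequality. Since $\tilde K=\Phi(\tilde X^n)$ is a deterministic function of $\tilde X^n$, conditioning on $\tilde Y^n$ gives the Markov relation $\tilde K-\tilde X^n-\tilde Z^n$, hence $I(\tilde K;\tilde Z^n\mid\tilde Y^n)\le I(\tilde X^n;\tilde Z^n\mid\tilde Y^n)$. Passing from $\tilde K$ to $\tilde X^n$ is the crucial move: $\tilde X^n$ determines the channel input $\tilde T^n=\Lambda(\tilde X^n)$, so the original conditional law of the output is a \emph{single} copy of the channel, $P_{Z^n\mid X^n,Y^n}(z^n\mid x^n,y^n)=W_n(z^n\mid\Lambda(x^n))$, rather than a mixture of channel laws. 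Next I would rewrite the conditional mutual information against the original output marginal $P_{Z^n}$ as reference, using $\mbb E_{\tilde P}\big[\log\frac{P_{\tilde Z^n\mid\tilde X^n,\tilde Y^n}}{P_{Z^n}}\big]=I(\tilde X^n;\tilde Z^n\mid\tilde Y^n)+D(P_{\tilde Z^n\mid\tilde Y^n}\,\|\,P_{Z^n}\mid P_{\tilde Y^n})$ and dropping the nonnegative divergence, so that $I(\tilde X^n;\tilde Z^n\mid\tilde Y^n)\le\mbb E_{\tilde P}\big[\log\frac{P_{\tilde X^n,\tilde Y^n,\tilde Z^n}}{P_{\tilde X^n,\tilde Y^n}\,P_{Z^n}}\big]$.

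Then I would carry out the change of measure. By the definition of the tilted law together with Lemma \ref{probS}, one has the pointwise bound $P_{\tilde X^n,\tilde Y^n,\tilde Z^n}\le\frac1\beta P_{X^n,Y^n,Z^n}$. Substituting this and factoring $P_{X^n,Y^n,Z^n}=P_{Z^n\mid X^n,Y^n}\,P_{X^n,Y^n}$ produces three contributions: the constant $\log\frac1\beta$; the mismatched channel information density $\mbb E_{\tilde P}\big[\log\frac{W_n(\tilde Z^n\mid\Lambda(\tilde X^n))}{P_{Z^n}(\tilde Z^n)}\big]$, which is precisely the left-hand side of \eqref{upperboundnewmutinf} and is therefore at most $n(R_{\epsilon,\sup}+\mu)$ (equivalently, it is bounded pointwise since $\tilde P$ is supported on $\mc S\subseteq\mc S_1$); and $-D(P_{\tilde X^n,\tilde Y^n}\,\|\,P_{X^n,Y^n})\le0$, which I discard. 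Collecting the terms gives $I(\tilde K;\tilde Z^n\mid\tilde Y^n)\le n(R_{\epsilon,\sup}+\mu)+\log\frac1\beta$, and dividing by $n$ yields the claim.

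The main obstacle is conceptual rather than computational: one must resist conditioning on $\tilde K$ directly. Because $\tilde K$ does not determine the channel input $\tilde T^n$, the conditional law $P_{\tilde Z^n\mid\tilde K,\tilde Y^n}$ is a mixture of the channels $W_n(\cdot\mid\Lambda(x^n))$ over all $x^n$ consistent with $(\tilde K,\tilde Y^n)$, and the pointwise bound defining $\mc S_1$ cannot be applied to such a mixture. Lifting to $\tilde X^n$ via data processing removes this mixing and is what makes the single-step change of measure — and the clean $\log\frac1\beta$ penalty — go through.
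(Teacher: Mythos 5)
Your proof is correct and follows essentially the same route as the paper's: lift from $\tilde K$ to $\tilde X^n$, isolate the channel information density term (bounded by $R_{\epsilon,\sup}+\mu$ via membership in $\mc S_1$, i.e., \eqref{upperboundnewmutinf}), extract the $\log\frac{1}{\beta}$ penalty from the pointwise change-of-measure bound supplied by Lemma \ref{probS} together with the Markov structure $P_{Z^n|X^n,Y^n}(\cdot|x^n,y^n)=P_{Z^n|T^n}(\cdot|\Lambda(x^n))$, and discard the two non-negative divergences. The only difference is cosmetic: by invoking data processing to drop $\tilde K$ at the outset you avoid the paper's case analysis over $\Phi(x^n)=k$ versus $\Phi(x^n)\neq k$, but the decomposition and the key ingredients are identical.
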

\begin{proof}
We have
\begin{align} 
&\frac{1}{n} I(\tilde{K};\tilde{Z}^{n}|\tilde{Y}^{n}) \nonumber \\ \nonumber\\
&\leq \frac{1}{n} I(\tilde{X}^{n}\tilde{K};\tilde{Z}^{n}|\tilde{Y}^{n}) \nonumber \\ \nonumber \\
&=\frac{1}{n}\mbb E \left[\log\frac{P_{\tilde{Y}^n}(\tilde{Y}^n)P_{\tilde{X}^n,\tilde{K},\tilde{Z}^{n},\tilde{Y}^n}(\tilde{X}^n,\tilde{K},\tilde{Z}^{n},\tilde{Y}^n)}{P_{\tilde{X}^n,\tilde{K},\tilde{Y}^n}(\tilde{X}^n,\tilde{K},\tilde{Y}^n)P_{\tilde{Z}^n,\tilde{Y}^n}(\tilde{Z}^n,\tilde{Y}^n)}\right] \nonumber \\ \nonumber \\
&=\frac{1}{n}\mbb E\left[\log\frac{P_{T^n,Z^n}(\Lambda(\tilde{X}^n),\tilde{Z}^n)}{P_{T^n}(\Lambda(\tilde{X}^n))P_{Z^n}(\tilde{Z}^n)}\right] \nonumber \\
&\quad+\frac{1}{n}\mbb E \left[\log\frac{P_{\tilde{X}^n,\tilde{K},\tilde{Z}^{n},\tilde{Y}^n}(\tilde{X}^n,\tilde{K},\tilde{Z}^{n},\tilde{Y}^n)}{P_{\tilde{X}^n,\tilde{K},\tilde{Y}^n}(\tilde{X}^n,\tilde{K},\tilde{Y}^n)P_{Z^n|T^n}(\tilde{Z}^n|\Lambda(\tilde{X}^n))}\right] \nonumber \\&\quad-\frac{1}{n}\mbb E\left[\log\frac{P_{\tilde{Z}^n,\tilde{Y}^n}(\tilde{Z}^n,\tilde{Y}^n)}{P_{\tilde{Y}^n}(\tilde{Y}^n)P_{Z^n}(\tilde{Z}^n)} \right] \nonumber \\ \nonumber \\
&=\frac{1}{n}\mbb E\left[\log\frac{P_{T^n,Z^n}(\Lambda(\tilde{X}^n),\tilde{Z}^n)}{P_{T^n}(\Lambda(\tilde{X}^n))P_{Z^n}(\tilde{Z}^n)}\right] \nonumber \\
&\quad+\frac{1}{n}\mbb E \left[\log\frac{P_{\tilde{X}^n,\tilde{K},\tilde{Z}^{n},\tilde{Y}^n}(\tilde{X}^n,\tilde{K},\tilde{Z}^{n},\tilde{Y}^n)}{P_{\tilde{X}^n,\tilde{K},\tilde{Y}^n}(\tilde{X}^n,\tilde{K},\tilde{Y}^n)P_{Z^n|T^n}(\tilde{Z}^n|\Lambda(\tilde{X}^n))}\right] \nonumber \\
&\quad-\frac{1}{n}D(P_{\tilde{Z}^n,\tilde{Y}^n}||P_{\tilde{Y}^n}P_{Z^n}) \nonumber \\ \nonumber \\
&\leq \frac{1}{n}\mbb E\left[\log\frac{P_{T^n,Z^n}(\Lambda(\tilde{X}^n),\tilde{Z}^n)}{P_{T^n}(\Lambda(\tilde{X}^n))P_{Z^n}(\tilde{Z}^n)}\right] \nonumber \\
&\quad+\frac{1}{n}\mbb E \left[\log\frac{P_{\tilde{X}^n,\tilde{K},\tilde{Z}^{n},\tilde{Y}^n}(\tilde{X}^n,\tilde{K},\tilde{Z}^{n},\tilde{Y}^n)}{P_{\tilde{X}^n,\tilde{K},\tilde{Y}^n}(\tilde{X}^n,\tilde{K},\tilde{Y}^n)P_{Z^n|T^n}(\tilde{Z}^n|\Lambda(\tilde{X}^n))}\right].
\nonumber \end{align}

Now, consider any $(x^n,k,z^n,y^n)\in \text{supp}(\tilde{X}^n)\times \text{supp}(\tilde{K}) \times \text{supp}(\tilde{Z}^n)\times \text{supp}(\tilde{Y}^n).$ 
If $\Phi(x^n)\neq k,$ then we have
$$P_{\tilde{X}^n,\tilde{K},\tilde{Z}^{n},\tilde{Y}^n}(x^n,k,z^n,y^n)=P_{X^n,K,Y^n}(x^n,k,y^n)     =0.$$
If $\Phi(x^n)=k,$ then we have
\begin{align}
&P_{\tilde{X}^n,\tilde{K},\tilde{Z}^{n},\tilde{Y}^n}(x^n,k,z^n,y^n) \nonumber \\
&= \frac{1}{\mbb P\left[ (X^n,Y^n,Z^n) \in \mc S\right]} P_{X^n,K,Z^{n},Y^n}(x^n,k,z^n,y^n) \nonumber \\
&\overset{(a)}{\leq} \frac{1}{\beta}P_{X^n,K,Z^{n},Y^n}(x^n,k,z^n,y^n) \nonumber \\
&=\frac{1}{\beta} P_{Z^{n}|X^n,K,Y^n}(z^n|x^n,k,y^n) P_{X^n,K,Y^n}(x^n,k,y^n) \nonumber \\
&\overset{(b)}{=}\frac{P_{Z^{n}|X^n,T^n,K,Y^n}(z^n|x^n,\Lambda(x^n),k,y^n) P_{X^n,K,Y^n}(x^n,k,y^n)}{\beta}  \nonumber \\
&\overset{(c)}{=} \frac{1}{\beta} P_{Z^{n}|T^n}(z^n|\Lambda(x^n)) P_{X^n,K,Y^n}(x^n,k,y^n), \nonumber 
\end{align}
where $(a)$ follows from Lemma \ref{probS}, $(b)$ follows because $T^n=\Lambda(X^n)$ and $(c)$ follows because $Y^{n}\circlearrow{X^{n}K}\circlearrow{T^n}\circlearrow{Z^{n}}$ forms a Markov chain.

Therefore, for any  $(x^n,k,z^n,y^n)\in \text{supp}(\tilde{X}^n)\times \text{supp}(\tilde{K}) \times \text{supp}(\tilde{Z}^n)\times \text{supp}(\tilde{Y}^n),$ we have
\begin{align}
&P_{\tilde{X}^n,\tilde{K},\tilde{Z}^{n},\tilde{Y}^n}(x^n,k,z^n,y^n)	\nonumber \\
&\leq \frac{1}{\beta} P_{Z^{n}|T^n}(z^n|\Lambda(x^n)) P_{X^n,K,Y^n}(x^n,k,y^n).
\nonumber \end{align}
This implies that for any $(x^n,k,z^n,y^n)\in \text{supp}(\tilde{X}^n)\times \text{supp}(\tilde{K}) \times \text{supp}(\tilde{Z}^n)\times \text{supp}(\tilde{Y}^n),$ we have
\begin{align}
&\frac{P_{\tilde{X}^n,\tilde{K},\tilde{Z}^{n},\tilde{Y}^n}(x^n,k,z^n,y^n)}{P_{\tilde{X}^n,\tilde{K},\tilde{Y}^n}(x^n,k,y^n)P_{Z^n|T^n}(z^n|\Lambda(x^n))}\nonumber \\ &\leq \frac{1}{\beta}\frac{P_{X^n,K,Y^n}(x^n,k,y^n)}{P_{\tilde{X}^n,\tilde{K},\tilde{Y}^n}(x^n,k,y^n)}.\nonumber 
\end{align}
Thus, we have
\begin{align}
	&\mbb E \left[\log\frac{P_{\tilde{X}^n,\tilde{K},\tilde{Z}^{n},\tilde{Y}^n}(\tilde{X}^n,\tilde{K},\tilde{Z}^{n},\tilde{Y}^n)}{P_{\tilde{X}^n,\tilde{K},\tilde{Y}^n}(\tilde{X}^n,\tilde{K},\tilde{Y}^n)P_{Z^n|T^n}(\tilde{Z}^n|\Lambda(\tilde{X}^n))}\right]\nonumber \\
	&\leq \log\frac{1}{\beta}+\mbb E\left[\log\frac{P_{X^n,K,Y^n}(\tilde{X}^n,\tilde{K},\tilde{Y}^n)}{P_{\tilde{X}^n,\tilde{K},\tilde{Y}^n}(\tilde{X}^n,\tilde{K},\tilde{Y}^n)}\right] \nonumber \\
	&=\log\frac{1}{\beta}-D(P_{\tilde{X}^n,\tilde{K},\tilde{Y}^n}||P_{X^n,K,Y^n})\nonumber \\
	&\leq \log\frac{1}{\beta}.
\nonumber \end{align}
Therefore, it follows that

\begin{align} 
&\frac{1}{n} I(\tilde{K};\tilde{Z}^{n}|\tilde{Y}^{n})\nonumber \\
&\leq 
\frac{1}{n}\mbb E\left[\log\frac{P_{T^nZ^n}(\Lambda(\tilde{X}^n),\tilde{Z}^n)}{P_{T^n}(\Lambda(\tilde{X}^n))P_{Z^n}(\tilde{Z}^n)}\right]+\frac{1}{n}\log\frac{1}{\beta} \nonumber \\
&\overset{(a)}{\leq} R_{\epsilon,\sup}+\mu+\frac{1}{n}\log\frac{1}{\beta},
\nonumber
\end{align}
where $(a)$ follows from \eqref{upperboundnewmutinf}.
\end{proof}
\begin{claimproof}
We have
\begin{equation}
\frac{1}{n}H(\tilde{K}|\tilde{Y}^{n})=\frac{1}{n}I(\tilde{K};\tilde{Z}^{n}|\tilde{Y}^{n})+\frac{1}{n}H(\tilde{K}|\tilde{Y}^{n},\tilde{Z}^{n}). \nonumber
\end{equation}

Now, since $\tilde{K}$ is equal to $\tilde{L}=\Psi(\tilde{Y}^n,\tilde{Z}^n)$ with probability one, it holds that
$H(\tilde{K}|\tilde{Y}^n,\tilde{Z}^n)=0.$
It follows using Lemma \ref{upperboundlogcardinality} and Lemma \ref{mutinftilektildezcondtildey}  that $\frac{1}{n}H(K|Y^n)\leq R_{\epsilon,\sup}+\zeta(n,\epsilon,\beta,\mu)$  for any $\beta \in \mathcal{B}$  and for sufficiently large $n.$
Since $R_{\epsilon,\sup}\leq u(\epsilon),$ it follows that $$\frac{1}{n}H(K|Y^n)\leq u(\epsilon)+\zeta(n,\epsilon,\beta,\mu).$$
 This completes the proof of the claim.
\end{claimproof}
 
 Now, let $J$ be a random variable uniformly distributed on $\{1,\dots, n\}$ and independent of $K$, $X^n$ and $Y^n$. We further define $U=(K,X_{1},\dots, X_{J-1},Y_{J+1},\dots, Y_{n},J).$ It holds that $U \circlearrow{X_J} \circlearrow{Y_J}.$
 Notice  that
 {{\begin{align}
 		H(K)&\overset{(a)}{=}H(K)-H(K|X^{n})\nonumber\\
 		&=I(K;X^{n}) \nonumber\\
 		&\overset{(b)}{=}\sum_{i=1}^{n} I(K;X_{i}|X_{1},\dots, X_{i-1}) \nonumber\\
 		&=n I(K;X_{J}|X_{1},\dots, X_{J-1},J) \nonumber\\
 		&\overset{(c)}{\leq }n I(U;X_{J}), \label{upperboundentropyK}
 		\end{align}}}where $(a)$ follows because $K=\Phi(X^n)$ and $(b)$ and $(c)$ follow from the chain rule for mutual information.
 		
 Let us now introduce the following lemma:
  \begin{lemma} (Lemma 17.12 in \cite{codingtheorems})
 	For arbitrary random variables $S$ and $R$ and sequences of random variables $X^{n}$ and $Y^{n}$, it holds that
 	\begin{align}
 	&I(S;X^{n}|R)-I(S;Y^{n}|R) \nonumber \\
 	&=\sum_{i=1}^{n} I(S;X_{i}|X_{1},\dots, X_{i-1}, Y_{i+1},\dots, Y_{n},R) \nonumber \\ &\quad -\sum_{i=1}^{n} I(S;Y_{i}|X_{1},\dots, X_{i-1}, Y_{i+1},\dots, Y_{n},R) \nonumber \\
 	&=n[I(S;X_{J}|V)-I(S;Y_{J}|V)],\nonumber
 	\end{align}
 	where $V=(X_{1},\dots, X_{J-1},Y_{J+1},\dots, Y_{n},R,J)$, with $J$ being a random variable independent of $R$,\ $S$, \ $X^{n}$ \ and $Y^{n}$ and uniformly distributed on $\{1 ,\dots, n \}$.
 	\label{lemma1}
 \end{lemma}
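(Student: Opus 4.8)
The plan is to prove the first equality by a telescoping argument over a sequence of ``hybrid'' mutual informations, and then to convert the resulting sum into the time-sharing form using the auxiliary index $J$.

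First, I would introduce for $i=0,1,\dots,n$ the quantities
\begin{align}
g_i = I(S; X_1,\dots,X_i, Y_{i+1},\dots,Y_n | R), \nonumber
\end{align}
so that $g_0 = I(S;Y^n|R)$ (no $X$-coordinates present) and $g_n = I(S;X^n|R)$ (no $Y$-coordinates present). The target difference then telescopes as
\begin{align}
I(S;X^n|R)-I(S;Y^n|R) = g_n-g_0 = \sum_{i=1}^n (g_i-g_{i-1}). \nonumber
\end{align}

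Second, I would evaluate each increment by the chain rule for mutual information. Writing $A_i=(X_1,\dots,X_{i-1})$ and $B_i=(Y_{i+1},\dots,Y_n)$, both $g_i=I(S;A_i,X_i,B_i|R)$ and $g_{i-1}=I(S;A_i,Y_i,B_i|R)$ share the common block $I(S;A_i,B_i|R)$; extracting the single coordinate in which they differ gives
\begin{align}
g_i &= I(S;A_i,B_i|R)+I(S;X_i|A_i,B_i,R), \nonumber \\
g_{i-1} &= I(S;A_i,B_i|R)+I(S;Y_i|A_i,B_i,R). \nonumber
\end{align}
Subtracting cancels the shared term and yields precisely the $i$-th summand appearing in the statement, which establishes the first equality.

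Third, to reach the second equality I would average over the independent uniform index $J$. Since $J$ is independent of $(S,R,X^n,Y^n)$, conditioning on the event $\{J=i\}$ leaves the joint law of all remaining variables unchanged, so that
\begin{align}
I(S;X_J|V) = \frac{1}{n}\sum_{i=1}^n I(S;X_i|X_1,\dots,X_{i-1},Y_{i+1},\dots,Y_n,R), \nonumber
\end{align}
and identically for $Y_J$. Multiplying by $n$ and taking the difference of the two expressions recovers $n[I(S;X_J|V)-I(S;Y_J|V)]$, completing the identity.

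There is no deep obstacle here, as the statement is a purely combinatorial rearrangement of the mutual-information chain rule (the classical Csisz\'ar sum identity). The only points that require care are the correct treatment of the boundary cases in which $A_1$ and $B_n$ are empty, applying the chain rule in the order that makes the common block $I(S;A_i,B_i|R)$ cancel cleanly rather than producing an expansion that entangles $X_i$ with $B_i$, and observing that it is precisely the independence of $J$ that legitimizes replacing the arithmetic average over $i$ by the conditional mutual information given $V$.
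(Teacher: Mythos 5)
Your proof is correct. Note that the paper itself does not prove this lemma at all: it is invoked purely by citation (Lemma 17.12 of Csisz\'ar--K\"orner), so there is no in-paper argument to compare against. What you have written is a complete, self-contained derivation of the cited identity, and it is the standard one: the hybrid quantities $g_i = I(S;X_1,\dots,X_i,Y_{i+1},\dots,Y_n\,|\,R)$ telescope between $g_0=I(S;Y^n|R)$ and $g_n=I(S;X^n|R)$; the chain rule applied so that the common block $I(S;A_i,B_i|R)$ is extracted first makes each increment exactly the $i$-th summand; and the passage to $I(S;X_J|V)-I(S;Y_J|V)$ is legitimate precisely because $J$ is independent of $(S,R,X^n,Y^n)$, so conditioning on $\{J=i\}$ (which is part of conditioning on $V$, since $J$ is a component of $V$) does not alter the law of the remaining variables, turning the conditional mutual information into the arithmetic average of the per-index terms. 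The boundary cases $i=1$ (empty $A_1$) and $i=n$ (empty $B_n$) work out as you indicate. In short, your proof supplies exactly what the paper outsources to the reference.
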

 Applying Lemma \ref{lemma1} for $S=K$, $R=\varnothing$ with $V=(X_1,\hdots, X_{J-1},Y_{J+1},\hdots, Y_{n},J)$ yields
 {{\begin{align}
 		&I(K;X^{n})-I(K;Y^{n}) \nonumber \\
 		&=n[I(K;X_{J}|V)-I(K;Y_{J}|V)] \nonumber\\
 		&\overset{(a)}{=}n[I(KV;X_{J})-I(V;X_{J})-I(KV;Y_{J})+I(V;Y_{J})] \nonumber\\ 
 		&\overset{(b)}{=}n[I(U;X_{J})-I(U;Y_{J})], 
 		\label{UhilfsvariableMIMO1}
 		\end{align}}}where $(a)$ follows from the chain rule for mutual information and from the fact that $V$ is independent of $(X_{J},Y_{J})$ and $(b)$ follows from $U=(K,V)$. It results using (\ref{UhilfsvariableMIMO1}) that
 {\begin{align}
 		n[I(U;X_{J})-I(U;Y_{J})]
 		&=I(K;X^{n})-I(K;Y^{n}) \nonumber\\
 		&=H(K)-I(K;Y^{n})\nonumber \\ 
 		&=H(K|Y^n).
 		\nonumber
 		\end{align}}
It follows using  Claim \ref{claim1} that for infinitely large $n$ satisfying \eqref{choiceepsilon}
\begin{align}
    I(U;X_{J})-I(U;Y_{J}) \leq u(\epsilon)+\zeta(n,\epsilon,\beta,\mu). \label{upperboundmutinf}
\end{align}
Since the joint distribution of $X_{J}$ and $Y_{J}$ is equal to $P_{XY},$ it follows from \eqref{upperboundentropyK} using \eqref{upperboundmutinf} that $\frac{H(K)}{n}$ is upper-bounded by $I(U;X)$ subject to $I(U;X)-I(U;Y) \leq u(\epsilon) + \zeta(n,\epsilon,\beta,\mu)$ with $U$ satisfying $U \circlearrow{X} \circlearrow{Y}.$ As a result, for sufficiently large $n$ satisfying \eqref{choiceepsilon}, it follows using \eqref{ratecorrelated} that any achievable $\epsilon$-UCR rate satisfies
\begin{align}
H <\underset{ \substack{U \\{\substack{U \circlearrow{X} \circlearrow{Y}\\ I(U;X)-I(U;Y) \leq u(\epsilon)+\zeta(n,\epsilon,\beta,\mu)}}}}{\max} I(U;X)+\delta.
 \label{righthandsideconverse}
\end{align}
By taking the limit when $n$ tends to infinity and then the infinimum over all $\beta\in\mathcal{B},\mu\in \mathcal{E},\delta>0,$ of the right-hand side of \eqref{righthandsideconverse}, it follows that
\begin{align}
&H \leq \underset{ \substack{U \\{\substack{U \circlearrow{X} \circlearrow{Y}\\ I(U;X)-I(U;Y) \leq u(\epsilon)}}}}{\max} I(U;X). \nonumber
\end{align}
This completes the proof of the upper-bound on the $\epsilon$-UCR capacity.

\section{Conclusion}
\label{conclusion}
In our work, we established a general expression for a lower and upper bound on the $\epsilon$-UCR capacity for a standard two-source model with unidirectional communication over arbitrary point-to-point channels. The bounds hold with equality except possibly at the points of discontinuity of the $\epsilon$-transmission capacity of the single-user channel, of which there are, at most, countably many.
As a future work, it would be interesting to investigate the problem of UCR generation from i.i.d. finite sources with two-way communication over arbitrary point-to-point channels.
\section*{Acknowledgments}
\color{black}
The authors acknowledge the financial support by the Federal Ministry of Education and Research
of Germany (BMBF) in the programme of “Souverän. Digital. Vernetzt.”. Joint project 6G-life, project identification number: 16KISK002.
H. Boche and R. Ezzine were further supported in part by the BMBF within the national initiative on Post Shannon Communication (NewCom) under Grant 16KIS1003K. 
C.\ Deppe was further supported in part by the BMBF within the national initiative on Post Shannon Communication (NewCom) under Grant 16KIS1005. C. Deppe was also supported by the German Research Foundation (DFG) within the project DE1915/2-1. M. Wiese was supported by the Bavarian Ministry of
Economic Affairs, Regional Development and Energy as part of the project 6G
Future Lab Bavaria.
\color{black}



\appendix
\section{Auxiliary Lemmas}
\label{auxlemmas}
\begin{lemma}
\label{upperboundvariance}
For $\lvert \mc K \rvert \geq 3,$ it holds for sufficiently large $n$ that
 \begin{align}
\mathrm{var}\left[\frac{1}{n}\log\frac{1}{P_{K}(K)}\right]\leq \lambda(\beta).
\nonumber \end{align} 
\end{lemma}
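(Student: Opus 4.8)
The plan is to expand the variance directly. Since $\mbb E\left[\frac{1}{n}\log\frac{1}{P_K(K)}\right]=\frac{1}{n}H(K)$, the quantity I must control is
\begin{align}
\mathrm{var}\left[\frac{1}{n}\log\frac{1}{P_K(K)}\right]=\mbb E\left[\frac{1}{n^2}\log^2 P_K(K)\right]-\frac{1}{n^2}H(K)^2. \nonumber
\end{align}
Everything therefore reduces to an upper bound on the second moment $\mbb E\left[\log^2 P_K(K)\right]$ of the log-probability, and the hard part is to obtain a bound whose dominant contribution is exactly $\log^2\lvert\mc K\rvert$, with all remaining terms growing at most linearly in $\log\lvert\mc K\rvert$ (hence $o(n^2)$ after normalization).

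To bound the second moment, I would partition $\mc K$ into the low-mass atoms $\mc K_L=\{k:P_K(k)\le 1/e\}$ and the high-mass atoms $\mc K_H=\{k:P_K(k)>1/e\}$, splitting $\mbb E[\ln^2 P_K(K)]$ accordingly and working in natural logarithm. The key combinatorial observation is that $1\ge\sum_{k\in\mc K_H}P_K(k)>\lvert\mc K_H\rvert/e$ forces $\lvert\mc K_H\rvert\le 2$; together with the hypothesis $\lvert\mc K\rvert\ge 3$ this guarantees $\lvert\mc K_L\rvert\ge 1$, so the low-mass part is nonempty and the logarithms below are well defined. On $\mc K_L$ each atom satisfies $1/P_K(k)\ge e$, where $\ln^2(\cdot)$ is concave, so Jensen's inequality yields $\sum_{k\in\mc K_L}P_K(k)\ln^2\frac{1}{P_K(k)}\le P_L\ln^2\frac{\lvert\mc K_L\rvert}{P_L}$ with $P_L=\sum_{k\in\mc K_L}P_K(k)$, while on $\mc K_H$, since $\ln^2(1/y)\le 1$ for $1/e<y\le 1$, the contribution is at most $1$.

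Combining the two parts, expanding $\ln^2(\lvert\mc K_L\rvert/P_L)$, bounding $\lvert\mc K_L\rvert\le\lvert\mc K\rvert$, and applying the elementary maxima $y\ln^2(1/y)\le 4/e^2$ and $y\ln(1/y)\le 1/e$ on $0<y\le 1$ to remove the $P_L$-dependence, I would reach
\begin{align}
\mbb E\left[\ln^2 P_K(K)\right]\le 1+\ln^2\lvert\mc K\rvert+\frac{4}{e^2}+\frac{2}{e}\ln\lvert\mc K\rvert. \nonumber
\end{align}
Dividing by $n^2\ln^2 2$, the terms $\frac{1+4/e^2}{n^2\ln^2 2}$ and $\frac{2\log\lvert\mc K\rvert}{n^2\ln(2)e}\le\frac{2c}{n\ln(2)e}$, where I use $\frac{\log\lvert\mc K\rvert}{n}\le c$ from \eqref{cardinalitycorrelated}, both vanish as $n\to\infty$; so for sufficiently large $n$ they are together at most $\beta$, giving $\mbb E\left[\frac{1}{n^2}\log^2 P_K(K)\right]\le\beta+\frac{\log^2\lvert\mc K\rvert}{n^2}$.

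Finally I would invoke the near-uniformity bound \eqref{uniformity} in the form $\frac{\log\lvert\mc K\rvert}{n}\le\frac{H(K)}{n}+\beta$ to replace $\log^2\lvert\mc K\rvert$ by $(H(K)+n\beta)^2$, whence subtracting $\frac{1}{n^2}H(K)^2$ leaves $\mathrm{var}\le\beta+2\beta\frac{H(K)}{n}+\beta^2$; bounding $\frac{H(K)}{n}\le\frac{\log\lvert\mc K\rvert}{n}\le c$ (again \eqref{cardinalitycorrelated}) then gives $\beta+2\beta c+\beta^2=\lambda(\beta)$, as required. I expect the main obstacle to be the second step, namely engineering a second-moment bound with leading term precisely $\log^2\lvert\mc K\rvert$: the high/low-mass split combined with the cardinality constraint $\lvert\mc K\rvert\ge 3$ is exactly what places the low-mass atoms in the concavity regime of $\ln^2$ and keeps the correction terms at order $\log\lvert\mc K\rvert=O(n)$ rather than $O(n^2)$.
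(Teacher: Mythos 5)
Your proposal is correct and follows essentially the same argument as the paper's own proof: the same low-mass/high-mass partition of $\mc K$ with $\lvert\mc K_H\rvert\le 2$ forced by $\lvert\mc K\rvert\ge 3$, the same Jensen step on the concavity region of $\ln^2$, the same elementary maxima $y\ln^2(1/y)\le 4/e^2$ and $y\ln(1/y)\le 1/e$, and the same final use of \eqref{uniformity} and \eqref{cardinalitycorrelated} to arrive at $\beta+2\beta c+\beta^2=\lambda(\beta)$. No gaps.
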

\begin{proof}
We have
\begin{align}
    \mbb E \left[\log^{2} P_{K}(K)\right]
    =\frac{1}{\ln(2)^2}\mbb E \left[\ln^{2} P_{K}(K)\right].
\nonumber \end{align}

Define the set
$\mc K_{L}=\{k\in \mc K: P_{K}(k)\leq \frac{1}{e}  \}$ and the set
$\mc K_{H}=\{k\in \mc K: P_{K}(k)> \frac{1}{e}  \}.$
Clearly, it holds that $\lvert \mc K_{L} \rvert +\lvert \mc K_{H} \rvert =\lvert \mc K\rvert.$ 
Let 
$
    P_{L}=\sum_{k\in \mc K_{L}} P_{K}(k)$
and
$P_{H}=\sum_{k\in \mc K_{H}} P_{K}(k).$

Notice first that $\lvert \mc K_{H} \rvert \frac{1}{e}<P_{H}\leq 1.$
This yields $\lvert \mc K_{H} \rvert<e.$
Therefore, it holds that $\lvert \mc K_{H} \rvert\leq 2.$
Since $\lvert \mc K\rvert \geq 3,$ it follows that
$\lvert \mc K _L \rvert=\lvert \mc K \rvert - \lvert \mc K_H\rvert \geq 1.$

Now, it holds that
\begin{align}
    &\mbb E \left[\ln^{2} P_{K}(K)\right]\nonumber \\
    &=\sum_{k\in \mc K_{L}} P_{K}(k)\ln^2\frac{1}{P_{K}(k)}+ \sum_{k\in \mc K_{H}} P_{K}(k)\ln^2\frac{1}{P_{K}(k)}.
\label{termsinthesum} 
\end{align}

We will find appropriate upper-bound for each term in the right-hand side of \eqref{termsinthesum}. On the one hand, we have
\begin{align}
    &\sum_{k\in \mc K_{L}} P_{K}(k)\ln^2\left(\frac{1}{P_{K}(k)}\right) \nonumber \\
    &\overset{(a)}{\leq} P_{L}\ln^2\left( \sum_{k\in \mc K_{L}} \frac{P_{K}(k)}{P_{L}}\frac{1}{P_{K}(k)}\right) \nonumber \\
    &=P_{L}\ln^{2}\frac{\lvert \mc K_{L}\rvert}{P_{L}},\nonumber\end{align}
where $(a)$ follows because $\ln^{2}(y)$ is concave in the range $y\geq e$ and because for any $k \in \mc K_L,$  $\frac{1}{P_{K}(k)}\geq e.$

On the other hand, we have
\begin{align}
    &\sum_{k\in \mc K_{H}} P_{K}(k)\ln^2\frac{1}{P_{K}(k)} \nonumber\\
    &\overset{(a)}{\leq} \sum_{k\in \mc K_{H}}  P_{K}(k) \ln^2(e) \nonumber \\
    &\leq 1,
\nonumber \end{align}
where $(a)$ follows because $\ln^2(1/y)$ is non-increasing in the range $0<y\leq 1$ and because $\frac{1}{e}<P_{K}(k)\leq 1$ for $k\in \mc K_{H}.$

This implies using the fact that $\lvert \mc K \rvert \geq \lvert \mc K_{L}\rvert\geq 1$
that
\begin{align}
    &\mbb E \left[\ln^{2} P_{K}(K)\right] \nonumber \\
    &\leq 1+ P_{L}\ln^{2}\frac{\lvert \mc K_L\rvert}{P_{L}} \nonumber \\
    &=1+P_{L}\left(\ln\left(\lvert\mc K_L\rvert \right)+\ln\frac{1}{P_{L}}  \right)^{2} \nonumber \\
    &\leq 1+P_{L}\left(\ln\left(\lvert\mc K\rvert\right)+\ln\frac{1}{P_{L}}  \right)^{2} \nonumber \\
    &= 1+P_{L}\left(\ln\left(\lvert\mc K\rvert\right)^2+\ln^{2}\frac{1}{P_{L}}+2\ln\left(\frac{1}{P_{L}}\right) \ln\lvert\mc K\rvert \right) \nonumber \\
    &\overset{(a)}{\leq} 1+\ln\left(\lvert\mc K\rvert\right)^2+\frac{4}{e^2}+2\frac{1}{e}\ln\lvert\mc K\rvert,
\nonumber \end{align}
where $(a)$ follows because $y\ln^2(1/y)$ and $y\ln(1/y)$ are maximized by $\frac{4}{e^2}$ and $\frac{1}{e}$ in the range $0<y\leq 1,$ respectively.

Thus, it follows that 
\begin{align}
    &\mbb E \left[\frac{1}{n^2}\log^{2} P_{K}(K)\right] \nonumber \\
    &\leq\frac{1}{n^2\ln(2)^2}\left(1+\ln\left(\lvert\mc K\rvert\right)^2+\frac{4}{e^2}+2\frac{1}{e}\ln\lvert\mc K\rvert\right) \nonumber \\
    &\overset{(a)}{\leq} \frac{1+\frac{4}{e^2}}{n^2\ln(2)^2}+\frac{\log^{2}\left(\lvert\mc K\rvert\right)}{n^2}+\frac{2c}{n\ln(2)e}, \nonumber 
\nonumber \end{align}
where $(a)$ follows because $\frac{\log \lvert \mc K \rvert}{n} \leq c$ 
(from \eqref{cardinalitycorrelated}). 

Since $\underset{n\rightarrow\infty}{\lim}\frac{1+\frac{4}{e^2}}{n^2\ln(2)^2}+\frac{2c}{n\ln(2)e}=0,$ it follows that for sufficiently large $n$
\begin{align}
      &\mbb E \left[\frac{1}{n^2}\log^{2} P_{K}(K)\right] \leq \beta+\frac{\log^{2}\left(\lvert\mc K\rvert\right)}{n^2}. \nonumber
\end{align}

From \eqref{uniformity}, we know that
\begin{align}
  \frac{\log\lvert \mc K \rvert}{n}\leq \frac{H(K)}{n}+\beta.
\nonumber \end{align}
It follows that
\begin{align}
     \mbb E \left[\frac{1}{n^2}\log^{2} P_{K}(K)\right] \leq \beta+\frac{1}{n^2} ( H(K)+n\beta)^2
\nonumber \end{align}
which yields
\begin{align}
    &\mathrm{var}\left[\frac{1}{n}\log\frac{1}{P_{K}(K)}\right] \nonumber \\
    &=\mbb E\left[\frac{1}{n^2}\log^2\left(\frac{1}{P_{K}(K)}\right)\right]-\frac{1}{n^2} H(K)^2 \nonumber \\
    &\leq \beta+ 2\beta \frac{H(K)}{n}+\beta^2 \nonumber \\
    &\overset{(a)}{\leq} \beta+ 2\frac{\beta\log\lvert \mc K \rvert}{n}+\beta^2\nonumber \\
    &\overset{(b)}{\leq} \beta+ 2\beta c+\beta^2 \nonumber \\
    &=\lambda(\beta),
\nonumber \end{align}
where $(a)$ follows because $H(K)\leq \log\lvert \mc K \rvert$ and $(b)$ follows from \eqref{cardinalitycorrelated}.
\end{proof}

\begin{lemma}
\label{boundprobsett} Let
\begin{align}
    \mc L=\Big\{ k\in \mc K: \frac{1}{n}\log\frac{1}{P_{K}(k)} \geq \frac{1}{n}H(K) -\frac{\gamma(\epsilon,\beta)}{2}     \Big\}.
\nonumber \end{align}For sufficiently large $n$ and for $0<\lambda(\beta)<1,$ we have
\begin{align}
\mbb P\left[K\in \mc L \right]\geq 1-4\frac{\lambda(\beta)}{\gamma(\epsilon,\beta)^2}>0. \nonumber
\end{align}
\end{lemma}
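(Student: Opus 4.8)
The plan is to read the set $\mc L$ as a one-sided typicality set for the self-information random variable $W=\frac{1}{n}\log\frac{1}{P_{K}(K)}$, whose mean is exactly $\mbb E[W]=\frac{1}{n}H(K)$, and to control the downward deviation of $W$ below its mean by a second-moment (Chebyshev) estimate. The only ingredient needed beyond elementary probability is a variance bound, and this is furnished by Lemma \ref{upperboundvariance}.

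First I would invoke Lemma \ref{upperboundvariance}, which gives, for $\lvert\mc K\rvert\geq 3$ and all sufficiently large $n$, the bound $\mathrm{var}[W]\leq\lambda(\beta)$. Next I observe that the complementary event $\{K\notin\mc L\}$ is precisely $\{W<\mbb E[W]-\tfrac{\gamma(\epsilon,\beta)}{2}\}$, which is contained in the symmetric deviation event $\{\lvert W-\mbb E[W]\rvert\geq\tfrac{\gamma(\epsilon,\beta)}{2}\}$. Chebyshev's inequality then yields
\[
\mbb P\left[K\notin\mc L\right]\leq\frac{\mathrm{var}[W]}{(\gamma(\epsilon,\beta)/2)^{2}}\leq\frac{4\lambda(\beta)}{\gamma(\epsilon,\beta)^{2}},
\]
and passing to the complement gives $\mbb P[K\in\mc L]\geq 1-\frac{4\lambda(\beta)}{\gamma(\epsilon,\beta)^{2}}$, which is the asserted lower bound.

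It remains to establish strict positivity of the right-hand side, and for this I would substitute the explicit definition $\gamma(\epsilon,\beta)=2\sqrt{\sqrt{\lambda(\beta)}/(1-\sqrt{\epsilon})}$, i.e.\ $\gamma(\epsilon,\beta)^{2}=4\sqrt{\lambda(\beta)}/(1-\sqrt{\epsilon})$. This collapses the error term to
\[
\frac{4\lambda(\beta)}{\gamma(\epsilon,\beta)^{2}}=\sqrt{\lambda(\beta)}\,(1-\sqrt{\epsilon}).
\]
Since $0<\lambda(\beta)<1$ gives $0<\sqrt{\lambda(\beta)}<1$ and $0<\epsilon<1$ gives $0<1-\sqrt{\epsilon}<1$, the product lies strictly below $1$, so $1-\frac{4\lambda(\beta)}{\gamma(\epsilon,\beta)^{2}}>0$.

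I do not expect a genuine obstacle here: the substantive work is the variance estimate, already isolated in Lemma \ref{upperboundvariance}, so the remaining argument is a direct Chebyshev application followed by an algebraic simplification. The one point requiring care is bookkeeping of the "sufficiently large $n$" qualifier, which must be inherited from Lemma \ref{upperboundvariance} and carried through so that $\mathrm{var}[W]\leq\lambda(\beta)$ is legitimately available when Chebyshev is applied.
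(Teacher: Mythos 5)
Your proposal is correct and follows essentially the same route as the paper's own proof: both bound $\mbb P[K\notin\mc L]$ by embedding the one-sided deviation event in the symmetric one, apply Chebyshev's inequality with the variance bound $\mathrm{var}\left[\frac{1}{n}\log\frac{1}{P_K(K)}\right]\leq\lambda(\beta)$ from Lemma \ref{upperboundvariance}, and then verify positivity via the identity $4\lambda(\beta)/\gamma(\epsilon,\beta)^2=\sqrt{\lambda(\beta)}\,(1-\sqrt{\epsilon})<1$. No gaps; your attention to inheriting the ``sufficiently large $n$'' qualifier from Lemma \ref{upperboundvariance} matches the paper's hypotheses exactly.
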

\begin{proof}
It holds that
\begin{align}
    &\mbb P\left[ K\notin \mc L\right] \nonumber \\
    &=\mbb P\left[ \frac{1}{n}\log\frac{1}{P_{K}(K)} - \frac{1}{n}H(K) <-\frac{\gamma(\epsilon,\beta)}{2}  \right] \nonumber \\
    &\leq \mbb P\left[ \bigg\lvert \frac{1}{n}\log\frac{1}{P_{K}(K)} - \frac{1}{n}H(K) \bigg\rvert > \frac{\gamma(\epsilon,\beta)}{2}  \right] \nonumber \\
    &\overset{(a)}{\leq} 4\frac{\mathrm{var}\left[\frac{1}{n}\log\left(\frac{1}{P_{ K}\left(K\right)}\right)  \right]}{\gamma(\epsilon,\beta)^2} \nonumber \\
    &\overset{(b)}{\leq} 4\frac{\lambda(\beta)}{\gamma(\epsilon,\beta)^2}, \nonumber
\end{align}
where $(a)$ follows from Chebyshev's inequality since $\mbb E\left[\frac{1}{n}\log\frac{1}{P_{K}(k)}\right] =\frac{1}{n}H(K)$ and $(b)$ follows from Lemma \ref{upperboundvariance}.
Therefore, we have
\begin{align}
\mbb P\left[K\in \mc L \right]&\geq 1-4\frac{\lambda(\beta)}{\gamma(\epsilon,\beta)^2}\overset{(a)}{>}0, \nonumber
\end{align}
where $(a)$ follows because for $0<\lambda(\beta)<1,$ we have \begin{align}
	0<4\frac{\lambda(\beta)}{\gamma(\epsilon,\beta)^2}= \sqrt{\lambda(\beta)}(1-\sqrt{\epsilon}) <1-\sqrt{\epsilon}<1. \nonumber
	\end{align}

This proves Lemma \ref{boundprobsett}.
\end{proof}

\begin{lemma}
\label{boundprobsetd}
For sufficiently large $n$ and for $0<\lambda(\beta)<1,$ it holds that
\begin{align}
  \mbb P \left[(K,Y^n)\in \mc D\right]  \geq \left(1-4\frac{\lambda(\beta)}{\gamma(\epsilon,\beta)^2}\right)^2. \nonumber
\end{align}
\end{lemma}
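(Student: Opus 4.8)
The plan is to prove a conditional analogue of Lemma~\ref{boundprobsett}, replacing the unconditional self-information $\tfrac1n\log\tfrac1{P_K(K)}$ by the conditional one $\tfrac1n\log\tfrac1{P_{K|Y^n}(K|Y^n)}$, and to obtain the squared bound by splitting the defining inequality of $\mc D$ into two independently controllable pieces. First I would write
\begin{align}
\frac1n\log\frac1{P_{K|Y^n}(K|Y^n)} &= \frac1n\log\frac1{P_K(K)} - \frac1n\log\frac{P_{K|Y^n}(K|Y^n)}{P_K(K)}, \nonumber
\end{align}
together with the identity $\tfrac1n H(K|Y^n)=\tfrac1n H(K)-\tfrac1n I(K;Y^n)$. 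Then the event $(K,Y^n)\in\mc D$ is implied by the conjunction of
\begin{align}
\mc E_1 &= \Big\{ \tfrac1n\log\tfrac1{P_K(K)} \geq \tfrac1n H(K) - \tfrac{\gamma(\epsilon,\beta)}{2} \Big\}, \nonumber \\
\mc E_2 &= \Big\{ \tfrac1n\log\tfrac{P_{K|Y^n}(K|Y^n)}{P_K(K)} \leq \tfrac1n I(K;Y^n) + \tfrac{\gamma(\epsilon,\beta)}{2} \Big\}, \nonumber
\end{align}
since on $\mc E_1\cap\mc E_2$ the two slacks of $\tfrac{\gamma(\epsilon,\beta)}{2}$ add up to the single slack $\gamma(\epsilon,\beta)$ appearing in $\mc D$.

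The event $\mc E_1$ is exactly $\{K\in\mc L\}$ of Lemma~\ref{boundprobsett}, so $\mbb P[\mc E_1]\geq 1-4\lambda(\beta)/\gamma(\epsilon,\beta)^2$ is already available. For $\mc E_2$ I would apply a second Chebyshev-type estimate to the information density $\tfrac1n\log\tfrac{P_{K|Y^n}(K|Y^n)}{P_K(K)}$, whose mean is $\tfrac1n I(K;Y^n)$, aiming at $\mbb P[\mc E_2]\geq 1-4\lambda(\beta)/\gamma(\epsilon,\beta)^2$ as well. The product form $\bigl(1-4\lambda(\beta)/\gamma(\epsilon,\beta)^2\bigr)^2$ — rather than the weaker $1-8\lambda(\beta)/\gamma(\epsilon,\beta)^2$ that inclusion–exclusion would give — should be obtained by conditioning, i.e.\ writing $\mbb P[\mc E_1\cap\mc E_2]=\mbb P[\mc E_1]\,\mbb P[\mc E_2\mid\mc E_1]$ (or conditioning on $Y^n$ and treating the $Y^n$-typicality and the $K$-given-$Y^n$ fluctuation as separate factors), and verifying that each factor is bounded below by $1-4\lambda(\beta)/\gamma(\epsilon,\beta)^2$.

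The step I expect to be the main obstacle is the control of $\mc E_2$. The self-information of the near-uniform $K$ has variance at most $\lambda(\beta)$ by Lemma~\ref{upperboundvariance}, but the information density carries no such a priori bound and its upper tail is uncontrolled in general; the saving structure must be the memorylessness of the source, which forces $\tfrac1n\log\tfrac1{P_{Y^n}(Y^n)}$ to concentrate (its variance is $O(1/n)$, hence $\leq\lambda(\beta)$ for large $n$). I would therefore route the argument through $Y^n$: condition on $Y^n=y^n$, use the i.i.d.\ asymptotic equipartition property of $Y^n$ for one factor, and apply the variance estimate of Lemma~\ref{upperboundvariance} to the conditional law $P_{K|Y^n=y^n}$ for the other. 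The delicate point is that the conditional self-information is naturally centered at $H(K|Y^n=y^n)$, whereas $\mc D$ is phrased with the average $H(K|Y^n)$, so the gap between these two must be absorbed into the slack; making this centering rigorous while preserving the product structure is exactly where a change-of-measure argument in the spirit of \cite{strongconverse} is needed, and is the part requiring the most care.
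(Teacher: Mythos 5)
Your algebraic reduction is fine: on $\mc E_1\cap\mc E_2$ one indeed has $\frac{1}{n}\log\frac{1}{P_{K|Y^n}(K|Y^n)}\geq\frac{1}{n}H(K|Y^n)-\gamma(\epsilon,\beta)$, and $\mbb P[\mc E_1]\geq 1-4\lambda(\beta)/\gamma(\epsilon,\beta)^2$ is exactly Lemma \ref{boundprobsett}. But the step you flag as "the main obstacle" is not a technical nuisance to be absorbed by a change of measure or by the AEP of $Y^n$ --- it is where the proposal fails, and it cannot be repaired. The hypotheses in force (\eqref{uniformity}, \eqref{cardinalitycorrelated}) constrain only the \emph{marginal} of $K$; nothing constrains how $\Phi(X^n)$ couples to $Y^n$, and memorylessness of the source does not help because $\Phi$ may tie whole-block events of $X^n$ to whole-block events of $Y^n$. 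Concretely: let $X_i=(B_i,X_i')$, $Y_i=B_i$ with all $B_i,X_i'$ i.i.d.\ fair bits, and set $K=(B_2,\dots,B_n)$ if $B_1=1$ and $K=(X_1',\dots,X_{n-1}')$ if $B_1=0$; this pair is permissible with $\mbb P[K\neq L]=0$ (send $X_1',\dots,X_{n-1}'$ over a noiseless binary channel, let Bob switch on $Y_1$). Then $K$ is exactly uniform on $\{0,1\}^{n-1}$, so Lemma \ref{upperboundvariance} holds with variance $0$ and $\mbb P[\mc E_1]=1$; yet $\frac{1}{n}\log\frac{P_{K|Y^n}(K|Y^n)}{P_K(K)}$ equals $\frac{n-1}{n}$ when $B_1=1$ and $0$ when $B_1=0$, each with probability $\frac{1}{2}$, while $\frac{1}{n}I(K;Y^n)=\frac{n-1}{2n}$. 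Hence, once $\beta$ is small enough that $\gamma(\epsilon,\beta)<\frac{1}{2}$ (such $\beta$ lie in $\mc B$), we get $\mbb P[\mc E_2]=\mbb P[\mc E_2\mid\mc E_1]=\frac{1}{2}$, nowhere near $1-4\lambda(\beta)/\gamma(\epsilon,\beta)^2$; note that here $Y^n$ is i.i.d.\ uniform, so its AEP is perfect and still does not save the bound. The obstruction lives entirely in the conditional law $P_{K|Y^n}$, which the hypotheses leave free.

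You should also know that the paper's own proof takes a genuinely different route and that your guess about the origin of the square is wrong. The paper never invokes concentration of the conditional information density: it restricts to $k\in\mc L$ (your $\mc E_1$) and, for each fixed $k$, applies Markov's inequality to the nonnegative variable $P_{K|Y^n}(k|Y^n)$, whose mean under the \emph{unconditional} law of $Y^n$ is $P_K(k)$; combining $P_K(k)\leq 2^{n\gamma(\epsilon,\beta)/2-H(K)}$ (definition of $\mc L$) with $H(K|Y^n)\leq H(K)$ gives a per-$k$ exceptional probability at most $2^{-n\gamma(\epsilon,\beta)/2}$. The square is then the crude final estimate $\left(1-2^{-n\gamma(\epsilon,\beta)/2}\right)\left(1-4\lambda(\beta)/\gamma(\epsilon,\beta)^2\right)\geq\left(1-4\lambda(\beta)/\gamma(\epsilon,\beta)^2\right)^2$ for large $n$, not a product of two concentration factors obtained by conditioning. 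That said, the paper's chain is itself fragile at precisely the point your example probes: it silently replaces $\mbb P[(K,Y^n)\in\mc D\mid K=k]$ by the probability of the same event under the unconditional law of $Y^n$ (which is what makes the identity $P_K(k)=\mbb E[P_{K|Y^n}(k|Y^n)]$ usable), and under that law every $y^n$ with $P_{K|Y^n}(k|y^n)=0$ counts as lying in $\mc D$. In the two-regime example the per-$k$ unconditional probability is $1-2^{-n}$, while the true joint probability is $\mbb P[(K,Y^n)\in\mc D]=\frac{1}{2}$, which falls below the bound asserted by Lemma \ref{boundprobsetd}. So neither your decomposition nor a literal reading of the paper's argument closes this lemma; a correct treatment needs some additional ingredient that genuinely controls $P_{K|Y^n}$ and not merely $P_K$.
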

\begin{proof}
Let
\begin{align}
    \mc L=\Big\{ k\in \mc K: \frac{1}{n}\log\frac{1}{P_{K}(k)} \geq \frac{1}{n}H(K) -\frac{\gamma(\epsilon,\beta)}{2}     \Big\}.
\nonumber \end{align}
We have
\begin{align}
    &\mbb P \left[(K,Y^n)\in \mc D\right]\nonumber \\ &\geq \sum_{k \in \mc L} \mbb P \left [(K,Y^n)\in \mc D|K=k\right]P_{K}(k) \nonumber \\
    &=\sum_{k \in \mc L}  \mbb P\left[\frac{\log\frac{1}{P_{K|Y^n}(k|Y^n)} }{n}\geq \frac{H(K|Y^n)}{n}-\gamma(\epsilon,\beta)  \right] P_{K}(k) \nonumber \\
    &=\sum_{k \in \mc L}  \mbb P\left[P_{K|Y^n}(k|Y^n)\leq 2^{n\gamma(\epsilon,\beta)-H(K|Y^n)} \right] P_{K}(k) \nonumber \\
    &\overset{(a)}{\geq}\sum_{k\in \mc L}\left(1-\frac{P_{K}(k)}{2^{n\gamma(\epsilon,\beta)-H(K|Y^n)}}\right) P_{K}(k) \nonumber \\
    &\overset{(b)}{\geq}\left(1-2^{\left[-n\frac{\gamma(\epsilon,\beta)}{2}+H(K|Y^n)-H(K)\right]}\right) \mbb P\left[ K\in \mc L \right] \nonumber \\
    &\overset{(c)}{\geq}\left(1-2^{-n\frac{\gamma(\epsilon,\beta)}{2}}\right)\mbb P\left[ K\in \mc L \right]  \nonumber \\
    &\overset{(d)}{\geq}\left(1-2^{-n\frac{\gamma(\epsilon,\beta)}{2}}\right)\left(1-\frac{4\lambda(\beta)}{\gamma(\epsilon,\beta)^2}\right), \nonumber
   \end{align}
  where $(a)$ follows from Markov's inequality since $P_{K}(k)=\mbb E\left[P_{K|Y^n}(k|Y^n)\right],$ $(b)$ follows because for $k\in \mc L,$  we know that $ P_{K}(k) \leq 2^{n\frac{\gamma(\epsilon,\beta)}{2}-H(K)},$ $(c)$ follows because $H(K|Y^n)-H(K)\leq 0$ and $(d)$ follows from Lemma \ref{boundprobsett}.
  
  Since $\underset{n\rightarrow \infty}{\lim} 1-2^{-n\frac{\gamma(\epsilon,\beta)}{2}}=1,$ it follows that for sufficiently large $n$
  \begin{align}
  \mbb P \left[(K,Y^n)\in \mc D\right]  \geq \left(1-4\frac{\lambda(\beta)}{\gamma(\epsilon,\beta)^2}\right)^2. \nonumber
  \end{align}
\end{proof}
\end{document}